\documentclass[aps,pra,amsmath,amssymb,reprint,10pt,longbibliography,onecolumn,notitlepage]{revtex4-1}
\usepackage{graphicx,mathrsfs,enumerate}
\usepackage[utf8]{inputenc}
\usepackage{mathtools}
\usepackage{amsfonts,amstext}
\usepackage{amsmath,amsthm,amssymb,mathbbol,tensor}
\usepackage{graphicx,xcolor,mathrsfs,enumerate}
\usepackage[hidelinks]{hyperref}\hypersetup{pdfpagemode=UseNone,pdfstartview=}
\usepackage[left=2cm,right=2cm]{geometry}
\usepackage{url}
\usepackage{times}
\usepackage{bbm}
\usepackage[english]{babel}

% Dirac notation and expectation values
\newcommand{\bra}[1]{\langle #1|}
\newcommand{\ket}[1]{|#1\rangle}
\newcommand{\braket}[2]{\langle #1|#2\rangle}
\newcommand{\ketbra}[2]{| #1 \rangle \langle #2 |}

\newtheorem{thm}{Theorem}
\newtheorem{cor}[thm]{Corollary}
\newtheorem{lem}[thm]{Lemma}

%%%%%%%%%%%%%%%%%%%%%%%%%%%%%%%%%%%%%%%%%%%%%%%%%%%%%%%%%%%%%%%%%%%%%%%%%%%%%%%%%%%%%
%%%%%%%%%%%%%%%%%%%%%%%%%%%%%%%%%%%%%%%%%%%%%%%%%%%%%%%%%%%%%%%%%%%%%%%%%%%%%%%%%%%%%
\begin{document}
\title{Error-run-time trade-off in the adiabatic approximation beyond scaling relations}
\author{M. R. Passos}\affiliation{Instituto de F\'isica, Universidade Federal do Rio de Janeiro, P. O. Box 68528, Rio de Janeiro, RJ 21941-972, Brazil}
\author{M. M. Taddei\footnote{marciotaddei[at]gmail.com}}\affiliation{Instituto de F\'isica, Universidade Federal do Rio de Janeiro, P. O. Box 68528, Rio de Janeiro, RJ 21941-972, Brazil}
\author{R. L. de Matos Filho} \affiliation{Instituto de F\'isica, Universidade Federal do Rio de Janeiro, P. O. Box 68528, Rio de Janeiro, RJ 21941-972, Brazil}

\begin{abstract}
The use of the adiabatic approximation in practical applications, as in adiabatic quantum computation, demands an assessment of the errors made in finite-time evolutions. 
Aiming at such scenarios, we derive bounds  relating  error and evolution time in the adiabatic approximation that go beyond typical scaling relations.
Using the Adiabatic Perturbation Theory, we obtain leading-order expressions valid for long evolution time $T$, while  explicitly determining the shortest time $T$ and the largest error $\varepsilon$ for which  they are valid. In this validity regime,  we can make clear and precise statements about the evolution time needed to reach a given error and vice-versa. As an example of practical importance, we apply these results to the adiabatic search, and obtain for the first time an error-run-time trade-off relation that fully reproduces the discrete-Grover-search scaling. We also pioneer the obtention of tight numerical values for $\varepsilon$ and $T$ under the error-reducing strategy  ``boundary cancelation''.
\end{abstract}
%%%%%%%%%%%%%%%%%%%%%%%%%%%%%%%%%%%%%%%%%%%%%%%%%%%%%%%%%%%%%%%%%%%%%%%%%%%%%%%%%
%%%%%%%%%%%%%%%%%%%%%%%%%%%%%%%%%%%%%%%%%%%%%%%%%%%%%%%%%%%%%%%%%%%%%%%%%%%%%%%%%

\maketitle

%%%%%%%%%%%%%%%%%%%%%%%%%%%%%%%%%%%%%%%%%%
%%%%%%%%%%%%%%%%%%%%%%%%%%%%%%%%%%%%%%%%%%
\section{Introduction}\label{sec:intro}

The quantum adiabatic theorem \cite{Born1927,Born1928} is a fundamental result derived early in the development of quantum mechanics. It states that a system   submitted to the action of a time-dependent Hamiltonian which changes sufficiently slowly in time will remain in an instantaneous eigenstate of that Hamiltonian throughout its evolution. Since its appearance, it has played an important role in quantum mechanics and  has found application in a wide range of subjects~\cite{Smith1969,Oreg1984,Averin1998,Shapiro2006}. More recently, the adiabatic theorem has been used as the basis for a new method of quantum computation, known as adiabatic quantum computing (AQC)~\cite{Farhi2000,Farhi2001,Albash2016}, generating a renewed interest in the quantum adiabatic approximation. 
In AQC, one chooses an initial Hamiltonian whose ground state can be easily prepared, whereas the corresponding eigenstate  of the final Hamiltonian encodes some information of interest (such as the answer to a computational problem). The time-dependent Hamiltonian is free to interpolate between the two so as to adiabatically drive the simple initial eigenstate into the information-encoding  eigenstate at the final time $T$ as well as possible. 
AQC is as powerful as the circuit version of quantum computation \cite{Aharonov2008,Albash2016} and is featured prominently in proposals for quantum machine learning \cite{Wittek2014}, quantum chemistry simulations \cite{Babbush2015}, as well as in several other examples of adiabatic quantum algorithms~\cite{Albash2016}. Quantum annealing~\cite{Finnila1993,Kadowaki1998} also bears a close relationship to AQC~\cite{McGeoch2014}.

As seen in its original form \cite{Born1928} or in standard textbook approaches \cite{Messiah1962}, however, the quantum adiabatic theorem only applies to infinitely slow system driving. Whereas in early applications it was not decisive  to have a rigorous bound on the shortest driving time $T$ necessary for  the adiabatic approximation  be reliable,  in AQC it is of fundamental importance. Since the main goal of  adiabatic quantum computing is to perform computations in a  time scale shorter than its classical counterparts, it becomes peremptory to have accurate estimates of the time necessary for an adiabatic behavior. 
This has taken the form of trade-off relations between the total time $T$ allowed for the system to be driven, which is called the run time of the algorithm, and the error made by applying the adiabatic theorem as an approximation for determining the final state of the system at the time $T$~\cite{Avron1987,Joye1991,Joye2002,Jansen2007,Lidar2009}.
The focus is to find an interpolation between initial and final Hamiltonians which minimizes  the run time $T$ and the error $\varepsilon$, measured by the distance between the actual state of the system at time $T$ and the ground state of the Hamiltonian at that time.   One, typically, seeks to minimize the run time $T$ for a fixed error $\varepsilon$ and problem size or, alternatively, to minimize the error $\varepsilon$ for a given run time $T$ and problem size.  

Since the appearance of AQC, bounds to the relation between  the run time $T$  and the error $\varepsilon$ have been presented~\cite{Albash2016}. On one hand, there are bounds of very general validity \cite{Ambainis2004,Jansen2007,OHara2008,Cheung2011}. These bounds, although rigorous, are not tight, and for this reason overestimate the run time for given error. Importantly, this may lead to incorrect scaling of the trade-off relation with system size, as we will later show. On the other hand, some authors have obtained relations between $\varepsilon$ and $T$ that, because they rely on some sort of approximation, are valid only in specific regimes \cite{MacKenzie2006,Rezakhani2010a,Wiebe2012}. Even though some of those are asymptotically tight, the lack of clear quantitative statements about the domain of validity of those approximations prevents one from establishing the range  of values of $T$ and $\varepsilon$ to which those relations in fact apply, leading to an inability to estimate the complexity of  quantum adiabatic algorithms (and in some cases, to wrong estimations). One of our main goals here is to provide tight  bounds to the relation between the run time $T$ and the error $\varepsilon$, whereas explicitly determining a lower limit on the value of $T$ and/or an upper limit on the value of $\varepsilon$ for which these bounds apply.

In this article we use the adiabatic perturbation theory (APT)~\cite{Rigolin2008} to obtain  general results for the  trade-off  relation between the run time $T$ and the error $\varepsilon$ of the adiabatic approximation. Going beyond scaling relations, we obtain leading-order expressions valid for large  time $T$. More importantly, we are able to explicitly determine the shortest time $T$ and the largest error $\varepsilon$ for which  these results are valid.  Restricting our considerations to within this validity regime,  we  can make clear and correct statements about the run time necessary to reach a given error  and vice-versa.
We apply our results to the quantum adiabatic search algorithm~\cite{Farhi2000,Roland2002} and obtain for the first time precise values for the trade-off relation, whose scaling with run time, error and system size in many instances has not been previously reported in the literature. In particular, we find for the first time that, for the adiabatic search problem under optimal driving, the error scales as $\varepsilon\sim1/\sqrt N$ with run time $T\sim\sqrt N$, exactly reproducing the scaling of the original, circuit-based Grover algorithm~\cite{Roland2003}. Furthermore, we obtain previously unknown error-run-time trade-off relations for the strategy called ``boundary cancelation'' \cite{Garrido1962,Lidar2009,Wiebe2012,Albash2016}, which imposes constraints on the driving to reduce errors.

This article is organized as follows. In Sec.~\ref{sec:main} we outline the framework of the adiabatic approximation and  derive general results for the trade-off relation between run time $T$ and error $\varepsilon$. In Sec.~\ref{sec:grover}, these results are applied to the adiabatic quantum search algorithm, which, besides being of  practical importance for AQC,  admits an exact analytical treatment and illustrates the capabilities of our method to derive novel results;  Sect \ref{sec:conclusions} is left to final remarks.

%%%%%%%%%%%%%%%%%%%%%%%%%%%%%%%%%%%%%%%%%%
%%%%%%%%%%%%%%%%%%%%%%%%%%%%%%%%%%%%%%%%%%
\section{Adiabatic errors and validity conditions}\label{sec:main}

We treat the evolution of a closed system from an initial time $t=0$ to a final time $t=T$ governed by a driven Hamiltonian of the form $H(t/T)$. This dependence solely on $s:=t/T$ ($\in [0,1]$) includes Hamiltonians of interest for most applications, most notably adiabatic quantum computation (AQC), but excludes those with more than one independent timescale. Importantly, this excludes Hamiltonians which can lead to so-called resonances \cite{Marzlin2004,Tong2005,Duki2006,Ma2006,Marzlin2006}, which happen when two different timescales of the Hamiltonian coincide \cite{Du2008,Ambainis2004,Tong2007,Comparat2009} and results in far-from-adiabatic evolution.

Besides being the final time, $T$ serves as a global parameter which rescales the total driving time, hence allows an adjustment of the speed of change of the Hamiltonian. The evolution obeys the Schrödinger equation, which, in terms of dimensionless time $s$, reads
\begin{equation}
\frac{i\hbar}{T} \ket{\dot\Psi(s,T)}= H(s) \ket{\Psi(s,T)} \ ,
\label{eq:Schroedinger}
\end{equation}
where $\ket{\Psi(s,T)}$ is the physical state of the system at $s$ and a dot over a variable indicates partial derivative with respect to $s$. The breakdown of the time dependence of each variable in $s,T$ is convenient to our discussion of long times $T$.

At each time $s$, there is an instantaneous eigenbasis $\{\ket{\phi_n(s)}\}$ of $H(s)$ with eigenenergies $E_n(s)$ nondecreasing with $n$  ($E_0(s)$ being the ground-state energy). The time dependence of each $\ket{\phi_n(s)}$ reflects the driving program and does \emph{not} obey the Schrödinger equation. Assuming an evolution that starts on the $j$-th eigenstate $\ket{\Psi(0,T)}=\ket{\phi_j(0)}$, however, the adiabatic theorem in its traditional form \cite{Messiah1962} states that, for infinitely long driving time $T$, the physical state coincides with the corresponding eigenstate along the evolution, $\ket{\Psi(s,T)}=e^{i\alpha}\ket{\phi_j(s)}$ for some phase $e^{i\alpha}$, given that $E_j(s)$ is non-degenerate throughout. The results of this paper will be valid for any initial eigenstate, but to fix the notation, we will consider, in the remainder of the text, the system to be initially in the ground state ($j=0$) --- and $E_0(s)$ will be assumed to be non-degenerate. 

The error of the adiabatic theorem amounts to precising how distant the physical state $\ket{\Psi(s,T)}$ is from the instantaneous ground state $\ket{\phi_0(s)}$.  Illustrative of their different roles is the fact that $\ket{\phi_0(s)}$ does not depend on $T$. Several figures of merit for this error $\varepsilon$ have been used in the literature \cite{Jansen2007,Lidar2009,Wiebe2012}. We will use an explicit well-known distance, the Bures angle:
\begin{equation}
\varepsilon=D[\ket{\Psi(s,T)},\ket{\phi_0(s)}] := \arccos \left(\left| \braket{\phi_0(s)}{\Psi(s,T)} \right|\right) \ .
\label{eq:def_Bures}
\end{equation}
The Bures angle has the advantage of being a genuine, Riemannian distance between the two states, defined from the Fubini-Study metric \cite{Bures1969,Kobayashi1969,Braunstein1994,Uhlmann1996,Taddei2014}, and, importantly, it depends on the fidelity between the two states, having a maximum value of $\pi/2$ for orthogonal states and a minimum of zero if, and only if, the two states are the same. This is particularly relevant in that we want not only to describe the scaling of the error, but to obtain numerical values.

In order to estimate the error made by using the adiabatic approximation, we make use of the Adiabatic Perturbation Theory \cite{Rigolin2008}, which relies on an expansion well-suited for long evolution times. First, the physical state of the system is decomposed in terms of the energy eigenbasis: 
\begin{equation}
\ket{\Psi(s,T)}=\sum_{n=0}e^{-iT\omega_n(s)}b_n(s,T)\ket{\phi_n(s)} \ ,
\label{eq:Psi_expansion}
\end{equation}
where the dynamical phase $\omega_n(s):=\frac1\hbar\int_0^sE_n(s')ds'$ and the sum extends over all eigenstates. The geometric phase usually appearing in this expansion is absent here because, without loss of generality, it will be taken as zero by assuming a choice of eigenvector phases such that $\braket{\dot\phi_n(s)}{\phi_n(s)}\equiv0$. 

To treat the long-run-time regime, each complex coefficient $b_n(s,T)$ is written \cite{Rigolin2008} as a sum of powers of $\frac1T$. Due to $T$-dependency of Eq.\eqref{eq:Schroedinger}, it is convenient to write it in terms of powers of $i\hbar/T$:
\begin{equation}
b_n(s,T) = \sum_{p=0}^\infty \frac{(i\hbar)^p}{T^p} b_n^{(p)}(s) \ .
\label{eq:bn_expansion}
\end{equation}
 In a slight abuse of notation, coefficients $b_n^{(p)}(s)$ may still depend on $T$ (usually in the form of a complex phase), but are upper- and lower-bounded by $T$-independent expressions. 
The (zero-order) adiabatic approximation amounts to taking $b_n(s)=b_n^{(0)}(s)=\delta_{n0}$, formally equivalent to the $T\to\infty$ limit. We will be interested in leading-order expressions beyond $b_n^{(0)}(s)$.

Importantly, any analysis based on a perturbative expansion requires validity conditions --- in our case, conditions for it to be truncated at the first nonvanishing term. To this end, we will obtain closed-form expressions for leading and next-to-leading order coefficients. This is essential for a correct analysis of the perturbative results.

%%%%%%%%%%%%%%%%%%%%%%%%%%%%%%%%%%%%%%%%%%
%%%%%%%%%%%%%%%%%%%%%%%%%%%%%%%%%%%%%%%%%%
\subsection{General results}\label{sec:general}

Let us begin to state our main results. The Bures angle can be expanded as
\begin{equation}
D[\ket{\Psi(s,T)},\ket{\phi_0(s)}] = \frac{\hbar}{T} \sqrt{\sum_{n\neq0}\left|b_n^{(1)}\right|^2}
- \frac{\hbar^2}{T^2} \left(\frac{\sum_{n\neq0}{\rm Im}\left(b_n^{*(1)}b_n^{(2)}\right)}{\sqrt{\sum_{n\neq0}\left|b_n^{(1)}\right|^2}}-\sqrt{\sum_{n\neq0}\left|b_n^{(1)}\right|^2} \ {\rm Im} \left(b_0^{(1)}\right) \right) + O\left(\frac1{T^3}\right) \ ,
\label{eq:DpowersTupto2}
\end{equation}
where, as before, the terms may have additional $T$ dependences, as long as they are both upper- and lower-bounded by $T$-independent expressions; $\sum_{n\neq0}$ indicates a sum that runs over all excited levels (proof in Appendix \ref{sec:expansionD1T}).
By calculating the first-order terms $b_n^{(1)}$, we can write a closed-form expression for the error:
\begin{equation}
\varepsilon=D[\ket{\Psi(s,T)},\ket{\phi_0(s)}] =  
\frac{\hbar}{T}\sqrt{\sum_{n\neq0} \left|\left[ \left.e^{iT\omega_{n0}}\frac{\braket{\phi_n}{\dot H|\phi_0}}{\Delta_{n0}^2}\right|_0^s\ \right]\right|^2}
 + O\!\left(\frac1{T^2}\right) \ , 
\label{eq:D_along}\end{equation}
where \mbox{$\Delta_{n0}(s):=E_n(s)-E_0(s)$}, $\omega_{n0}(s):=\omega_n(s)-\omega_0(s)$ and the notation $f|_a^b:=f(s=b)-f(s=a)$ has been used --- in this case, $f$ involves the gaps, inner products, phases.

For numerical applications, it is interesting to avoid the instability of calculating the oscillating exponential. This can be done by upper- and lower-bounding the leading-order term above: 
\begin{equation}
\varepsilon=D[\ket{\Psi(s,T)},\ket{\phi_0(s)}] \lessgtr 
\frac{\hbar}{T}\sqrt{\sum_{n\neq0} \left(\left| \frac{\braket{\phi_n}{\dot H|\phi_0}}{\Delta_{n0}^2}\right|_s \pm \left|\frac{\braket{\phi_n}{\dot H|\phi_0}}{\Delta_{n0}^2}\right|_{0}\right)^2} + O\!\left(\frac1{T^2}\right) \ , 
\label{eq:D_along_bounds}\end{equation}
where the $+$/$-$ sign gives the upper/lower bound. These bounds also have the advantage of only depending on initial- and final-time parameters, since they do away with $\omega_{n0}(s)$.

The $1/T^2$ term in Eq.\eqref{eq:DpowersTupto2} can be used to establish a reasonable, practical validity condition for Eqs.(\ref{eq:D_along},\ref{eq:D_along_bounds}): they will be valid as long as $T$ is large enough that the second term is negligible compared to the first. This happens when
\begin{equation}
T\geqslant T_{\rm val} = C \ \hbar \left|\frac{\sum_{n\neq0}{\rm Im}\left(b_n^{*(1)}(s)\ b_n^{(2)}(s)\right)}{\sum_{n\neq0}\left|b_n^{(1)}(s)\right|^2}\right| \ ,
\label{eq:validity_bn}
\end{equation} 
where  $C$ is a large constant (whose actual value is accuracy dependent) and the fact that $b_0^{(1)}\in\mathbb R$ has been used. 
Importantly, from $T_{\rm val}$ one can obtain an upper bound $\tilde{\varepsilon}$ on the error $\varepsilon$ for which these expressions are valid, given by $\tilde{\varepsilon}= D[\ket{\Psi(s,T_{\rm val})},\ket{\phi_0(s)}]$. Substituting $T_{\rm val}$ from Eq.\eqref{eq:validity_bn} in Eq.\eqref{eq:DpowersTupto2} to first order,
\begin{equation}
\tilde{\varepsilon} = \frac1C \ \frac{\left(\sum_{n\neq0}\left|b_n^{(1)}(s)\right|^2\right)^{3/2}}{\left|\sum_{n\neq0}{\rm Im}\left(b_n^{*(1)}b_n^{(2)}(s)\right)\right|} \ .
\label{eq:epsilontilde}
\end{equation} 

Since we know that error-run-time relations of the type we derived here are valid only for $T \geqslant T_{\rm val}$ and $\varepsilon\leqslant\tilde{\varepsilon}$, we rewrite these relations as
\begin{equation}
\varepsilon\leqslant\alpha\tilde{\varepsilon}\quad\mbox{for}\quad T\geqslant \frac{T_{\rm val}}{\alpha}\,\,\,\,(0<\alpha\leqslant1),
\label{eq:runtimealpha}\end{equation}
and refrain from making any statement for $\varepsilon>\tilde{\varepsilon}$ and/or $T<T_{\rm val}$. Doing so, we guarantee that these relations are correct for any value of $0<\alpha\leqslant 1$.

The values of the coefficients involved can be calculated. For $n\neq0$, $b_n^{(1)}$ can be cast as
\begin{equation}
b_n^{(1)}(s) = \left. e^{iT\omega_{n0}(s')}\lambda_{n0}(s')\right|_{s'=0}^s  \ ,
\label{eq:b1}
\end{equation}
where $\lambda_{nk}(s):=\braket{\phi_n(s)}{\dot\phi_k(s)}/\Delta_{nk}(s)$, in accordance with previous results \cite{MacKenzie2006,Rigolin2008,Wiebe2012}. The coefficient $b_n^{(2)}$, in turn, equals \cite{Rigolin2008}
\begin{multline}
 b_n^{(2)}(s)= e^{iT\omega_{n0}(s)} J_0(s)\lambda_{n0}(s) - J_n(s)\lambda_{n0}(0) + \\
+\left[e^{iT\omega_{n0}(s')}\left(\frac{\dot\lambda_{n0}(s')}{\Delta_{n0}(s')} + \sum_{k\neq0,n}\lambda_{k0}(s')\frac{\braket{\phi_n(s')}{\dot\phi_k(s')}}{\Delta_{n0}(s')}\right) -\sum_{k\neq0,n}e^{iT\omega_{nk}(s')}\lambda_{k0}(0)\lambda_{nk}(s')\right]_{s'=0}^s \ ,
\label{eq:b2}\end{multline}
where $J_n(s):=\sum_{k\neq n}\int_0^s{|\braket{\phi_k(s')}{\dot\phi_n(s')}|^2/\Delta_{kn}(s')}ds'$. 
 The calculation of these coefficients is found in Appendices \ref{sec:leading_coeff} and \ref{sec:nextorderterms}.

\subsection{Boundary cancelation}\label{sec:boundarycancelation}

There are cases of interest, however, where the expressions above are not the most useful, such as when $\dot H(0)=0=\dot H(s)$, which makes $b_n^{(1)}(s)=0$. Especially relevant is the use of boundary cancelation \cite{Garrido1962,Lidar2009,Wiebe2012,Albash2016}, which is known to reduce the scale of the error with $T$ in the asymptotic limit of long times. Therefore, we now tackle Hamiltonian evolutions obeying the boundary-cancelation condition,
\begin{equation}
H^{(j)}(0)=0=H^{(j)}(1) \ \ {\rm for all} \  1\leqslant j\leqslant p \ ,
\label{eq:conds_boundarycancel}
\end{equation}
where $H^{(j)}:=(\partial/\partial s)^j H$. The study of these boundary-cancelation conditions is especially relevant for quantum computation, since the freedom to impose them is typically within the experimenter's reach.
Unlike previous boundary-cancelation results \cite{Garrido1964,Lidar2009,Rezakhani2010a,Wiebe2012}, we go beyond the scaling of the error with $T$ and express the distance in terms of quantities with clear physical interpretations, with concise bounds that only depend on the initial- and final-time Hamiltonian, facilitating numerical approximations. 
 
The expansion of the Bures angle in parameter $1/T$ at the final time $s=1$ is altered, since $b_n^{(1)}(1)=0$. In fact, all coefficients $b_n^{(j)}(1)$ vanish up to $j=p$ and this expansion reads
\begin{equation}
D[\ket{\Psi(1,T)},\ket{\phi_0(1)}] = \frac{\hbar^{p+1}}{T^{p+1}} \sqrt{\sum_{n\neq0}\left|b_n^{(p+1)}(1)\right|^2} 
- \frac{\hbar^{p+2}}{T^{p+2}} 
\frac{\sum_{n\neq0}{\rm Im}\left(b_n^{*(p+1)}(1) \ b_n^{(p+2)}(1)\right)}{\sqrt{\sum_{n\neq0}\left|b_n^{(p+1)}(1)\right|^2}}  + O\left(\frac1{T^{p+3}}\right) \ ,
\label{eq:DpowersTuptop2}
\end{equation}
where $b_0^{(1)}\in\mathbb R$ has been used. This demonstrates the boundary-cancelation effect of producing an error-run-time tradeoff of $\varepsilon\sim1/T^{p+1}$ from the perturbative analysis. Calculating the first non-vanishing coefficient $b_n^{(p+1)}(1)$ and substituting in the expansion, we find
\begin{equation}\begin{split}
\varepsilon = D[\ket{\Psi(1,T)},\ket{\phi_0(1)}] =\frac{\hbar^{p+1}}{T^{p+1}} 
\sqrt{\sum_{n\neq0} \left|\left[ \left.e^{iT\omega_{n0}}\frac{\braket{\phi_n}{H^{(p+1)}|\phi_0}}{\Delta^{p+2}_{n0}}\right|_0^1 \ \right]\right|^2} + O\!\left(\frac1{T^{p+2}}\right) \ . 
\end{split}\label{eq:D_end}\end{equation}
Bounds without the oscillating exponentials can again be found, up to $O(1/{T^{p+2}})$
\begin{align}
\varepsilon =  D[\ket{\Psi(1,T)},\ket{\phi_0(1)}] \lessgtr \frac{\hbar^{p+1}}{T^{p+1}} 
\sqrt{\sum_{n\neq0} \left(\left| \frac{\braket{\phi_n}{H^{(p+1)}|\phi_0}}{\Delta_{n0}^{p+2}}\right|_{1} \pm \left|\frac{\braket{\phi_n}{H^{(p+1)}|\phi_0}}{\Delta_{n0}^{p+2}}\right|_{0}\right)^2} \ .
\label{eq:D_end_bounds}\end{align}
A validity condition is obtained from the comparison of the second term of Eq.\eqref{eq:DpowersTuptop2} with the first, leading to
\begin{equation}
T_{\rm val} = C \ \hbar \left|\frac{\sum_{n\neq0}{\rm Im}\left(b_n^{*(p+1)}(1)\ b_n^{(p+2)}(1)\right)}{\sum_{n\neq0}\left|b_n^{(p+1)}(1)\right|^2}\right| \ .
\label{eq:validity_bn_p}
\end{equation} 
Assuming $p\geqslant1$ and $n\neq0$, $b_n^{(p+1)}$ and $b_n^{(p+2)}$ are given by
\begin{equation}
b_n^{(p+1)}(1) = e^{iT\omega_{n0}(1)}\frac{\lambda_{n0}^{(p)}(1)}{\Delta^{p}_{n0}(1)}-\frac{\lambda_{n0}^{(p)}(0)}{\Delta^{p}_{n0}(0)}
\label{eq:bp1}
\end{equation}
\begin{equation}
b_n^{(p+2)}(1) =  e^{iT\omega_{n0}(1)}\frac{\lambda_{n0}^{(p+1)}(1)}{\Delta_{n0}^{p+1}(1)}
-\frac{\lambda_{n0}^{(p+1)}(0)}{\Delta_{n0}^{p+1}(0)} 
+e^{iT\omega_{n0}(1)}J_0(1)\frac{\lambda_{n0}^{(p)}(1)}{\Delta_{n0}^{p}(1)}
-J_n(1)\frac{\lambda_{n0}^{(p)}(0)}{\Delta_{n0}^{p}(0)}
\label{eq:bp2}
\end{equation}
and we remark that the conditions for boundary cancelation of Eq.\eqref{eq:conds_boundarycancel} allow us to write $\lambda_{n0}^{(j)}=-\braket{\phi_n|H^{(j+1)}}{\phi_0}/\Delta_{n0}$ for $s=0$ or $1$ and $j\leqslant p+1$.

The error value $\tilde{\varepsilon}$ that upper-bounds the validity of these expressions is given by substituting Eq.\eqref{eq:validity_bn_p} in Eq.\eqref{eq:DpowersTuptop2} up to leading order:
\begin{equation}
\tilde{\varepsilon} = \frac1{C^{p+1}} \ \frac{\left(\sum_{n\neq0}\left|b_n^{(p+1)}(1)\right|^2\right)^{p+3/2}}{\left|\sum_{n\neq0}{\rm Im}\left(b_n^{*(p+1)}(1)\ b_n^{(p+2)}(1)\right)\right|^{p+1}} \ .
\label{eq:epsilontildep}
\end{equation} 
Given that $b_n^{(p+1)}$ and $b_n^{(p+2)}$ are of the same order of magnitude --- as will be the case in our example --- this expression shows a general, disadvantageous feature of boundary cancelation: although a very favorable scaling of the error with the run time is eventually obtained, that scaling is only reached for extremely low values of the error ($\propto1/C^{p+1}$ for large $C$). As will be clear in our application to Grover search, practical error tolerances will often be above this bound, and, other drivings present acceptably low errors in a time shorter than $T_{\rm val}$. Notice also that the trade-off is now valid for
\begin{equation}
\varepsilon\leqslant\alpha^{p+1}\tilde{\varepsilon}\quad\mbox{and}\quad T\geqslant \frac{T_{\rm val}}{\alpha}\,\,\,\,(0<\alpha\leqslant1) .
\label{eq:runtimealphap}\end{equation}

Finally, a comment on the quantity $J_n$, which is present in $T_{\rm val}$ [compare Eqs.(\ref{eq:validity_bn_p},\ref{eq:bp2}), or Eqs.(\ref{eq:validity_bn},\ref{eq:b2})]. This quantity is upper-bounded by
\begin{equation}
J_n(s)\leqslant\int_0^s\frac{\mathcal F_{\phi_n}(s')ds'}{4\min_k\Delta_{kn}(s')} \ ,
\label{eq:boundJnFisher}
\end{equation}
where $\mathcal F_{\phi_n}(s)$ is the quantum Fisher information of eigenstate $\ket{\phi_n(s)}$ with respect to $s$. It has been established that $\sqrt{\mathcal F_{\phi_n}(s)}$ constitutes a speed for the evolution of the state $\ket{\phi_n(s)}$ when using the Fubini-Study metric \cite{Braunstein1994,Taddei2013} and therefore is a figure of merit for the driving speed of the Hamiltonian $H(s)$. This relation establishes a connection between the adiabatic theorem and the differential geometry of the quantum state space, which will be explored as we apply these results to the Grover search.

%%%%%%%%%%%%%%%%%%%%%%%%%%%%%%
%%%%%%%%%%%%%%%%%%%%%%%%%%%%%%
\section{Application: Adiabatic quantum search}\label{sec:grover}
In the following, we apply the general results  presented in the previous section  to a concrete example given by the adiabatic quantum search algorithm,  first presented in Ref.~\cite{Farhi2000} and later modified in Ref.~\cite{Roland2002}. It is an adiabatic version of Grover's original discrete quantum search algorithm~\cite{Grover1997}. Grover's algorithm searches for a marked item in an unstructured database composed of $N$ items. It finds the marked item with error probability $O(1/\sqrt N)$ after $O(\sqrt N)$ queries to a quantum ``oracle'', whereas a classical algorithm has to query an ``oracle'', on average, $N/2$ times in order to find the marked item. This means that the quantum algorithm leads to a quadratic speedup in solving this problem when compared to the best classical algorithm.

We present in this Section error-run-time relations with accurate values whose scaling has never been reported. The correct assessment of the validity time $T_{\rm val}$ is crucial for these results. By finding a driving schedule that minimizes this validity time $T_{\rm val}$, we independently obtain the optimal driving from \cite{Roland2002}. We are the first to report that, under such driving, one obtains $\varepsilon=O(1/\sqrt N)$ after time $T=O(\sqrt N)$, equaling the discrete-Grover scaling. The evidence that $T=O(\sqrt N)$ was sufficient to achieve small error in this case was so far only numerical \cite{Albash2016} and the scaling $\varepsilon=O(1/\sqrt N)$ was unheard of in the adiabatic search. Later, for the simple linear interpolation, we show for the first time that the usual $T=O(N)$ run time leads to a very small $\varepsilon=O(1/N^{3/2})$ error. For the boundary-cancelation scenario, where $p$ derivatives are set to zero, we are the first to obtain error values and to show that, after time $T=O(N)$ with pre\-fac\-tor increasing with $p$, a scaling of $\varepsilon=O(1/N^{p+3/2})$ is reached. This points to an unsuitability of boundary cancelation for AQC. We also numerically comp\-a\-r\-e our result to other bounds in the literature and to numerical simulations, showing that our bounds are tight. In summary, we show that the polynomial regime of $\varepsilon\sim1/T$, $\varepsilon\sim1/T^{p+1}$ is fully capable of giving the correct scaling of error and run time with system size --- once the proper validity time is known. This is in contrast to existing claims in the literature~\cite{Rezakhani2010a}.

\subsection{The adiabatic quantum search algorithm}\label{subsec:adiabaticgrover}
The adiabatic quantum search algorithm is an important example of quantum speedup obtainable  via AQC. It uses $n=\log_2N$ q-bits to represent the $N$ items of the database. Each item is represented by one of the elements $\ket{x}$ of the computational basis, with $x=0,\cdots ,N-1$, while the marked item is represented by $\ket{m}$. The oracle Hamiltonian is given by $ H_f=\mathbbm{1}-\ketbra{m}{m}$, which has $\ket{m}$ as ground state with eigenvalue $0$. The $n$ q-bits are initialized in the uniform superposition state $\ket{\sigma}=\frac{1}{\sqrt{N}}\sum_{x=0}^{N-1}\ket{x}$ and are submitted to the action of a time-dependent Hamiltonian which interpolates between the initial Hamiltonian $H_i=\mathbbm{1}-\ketbra{\sigma}{\sigma}$, which has $\ket{\sigma}$ as ground state with eigenvalue $0$,  and the final Hamiltonian $H_f$:
\begin{align}
H(s)/\mathsf{E}&=\left(1\!-\!f(s)\right)H_i+f(s) H_f \label{eq:searchhamilt}\\ \nonumber
&=\left(1\!-\!f(s)\right)\left(\mathbbm{1}\!-\!\ketbra{\sigma}{\sigma}\right)+f(s)\left(\mathbbm{1}\!-\!\ketbra{m}{m}\right)\ ,
\end{align}
where $\mathsf{E}$ is a dimensional constant that establishes the energy scale and $f(s)$ is a monotonically increasing function with $f(0)=0$ and $f(1)=1$. Since the system is initialized in the state $\ket{\sigma}$, its evolution will remain in the subspace spanned by $\ket{\sigma}$ and $\ket{m}$. The problem may, then,  be studied in the two-dimensional subspace spanned by $\ket{m}$ and $\ket{m_\perp}=\sqrt{\frac{N}{N-1}}\left(\ket{\sigma}-1/\sqrt{N}\ket{m}\right)$. In this subspace, the Hamiltonian $H(s)/\mathsf{E}$ reduces to
\begin{equation}
\!\!\!\!\left.\frac{H(s)}{\mathsf{E}}\right|_{\ket{m},\ket{m_\perp}}\!=\!\left(\begin{array}{cc}\!\!\left(\frac{N\!-\!1}{N}\right)\left(1\!-\!f(s)\right)&\sqrt{\frac{N\!-\!1}{N^2}}\left(f(s)\!-\!1\right)\\
\!\!\sqrt{\frac{N\!-\!1}{N^2}}\left(f(s)\!-\!1\right)& \left(\frac{N\!-\!1}{N}\right)f(s)\!+\!\frac{1}{N}
\end{array}\right).
\end{equation}
Its dimensionless eigenvalues inside this subspace are given by
\begin{equation}
E_0(s)=\frac{1}{2}\left(1-\Delta(s)\right),\qquad  E_1(s)=\frac{1}{2}\left(1+\Delta(s)\right),
\label{eq:eigenvaluesgrover}\end{equation}
where
\begin{equation}
\Delta(s)=\sqrt{1-4\left(\frac{N-1}{N}\right)f(s)\left(1-f(s)\right)}
\label{eq:gapgrover}\end{equation}
is the dimensionless gap between the two eigenvalues. The corresponding eigenstates can be written as
\begin{eqnarray}
\ket{\phi_0(s)}=\sin{\frac{\theta(s)}{2}}\ket{m}+\cos{\frac{\theta(s)}{2}}\ket{m_\perp} \ , \ \ \ \ \  %\label{eq:eigenstatesgrovera}\\
\ket{\phi_1(s)}=\cos{\frac{\theta(s)}{2}}\ket{m}-\sin{\frac{\theta(s)}{2}}\ket{m_\perp},
\label{eq:eigenstatesgrover}\end{eqnarray}
where
\begin{eqnarray}
\sin{\frac{\theta(s)}{2}}=\frac{1}{\sqrt{2}}\left(1-\frac{2\left(\frac{N-1}{N}\right)\left(1-f(s)\right)-1}{\Delta(s)}\right)^{1/2} \ , \ \ \ \ \ %\label{sincosa}\\
\cos{\frac{\theta(s)}{2}}=\frac{1}{\sqrt{2}}\left(1+\frac{2\left(\frac{N-1}{N}\right)\left(1-f(s)\right)-1}{\Delta(s)}\right)^{1/2}\!\!\!\!\!.
\label{eq:sincos}\end{eqnarray}
The remaining $N-2$ eigenstates of $H(s)/\mathsf{E}$ outside the two-dimensional subspace have a common  constant eigenvalue $E_2=1$. 

Later in this section we shall need the transition matrix element $\bra{\phi_1(s)}\dot{H}(s)/\mathsf{E}\ket{\phi_0(s)}$, which can be  determined straightforwardly  from Eqs.~\eqref{eq:searchhamilt} and \eqref{eq:eigenstatesgrover}-\eqref{eq:sincos}:
\begin{equation}
\bra{\phi_1(s)}\dot{H}(s)/\mathsf{E}\ket{\phi_0(s)}=-\frac{\sqrt{N-1}}{N}\frac{\dot{f}(s)}{\Delta(s)}.
\label{eq:matrixhdot}
\end{equation}

\subsection{Error-run-time relations}\label{subset:erorrtimerel}
We apply now the general results of Section~\ref{sec:general} to the adiabatic quantum search algorithm in order to obtain a trade-off relation between the error $\varepsilon$ and the run time $T$ for this problem. First we shall consider the traditional interpolation schedules, which do not  require that  derivatives of $H(s)$ vanish at $s=0$ and $s=1$. We begin by adapting the expression for the upper bound to the error $\varepsilon$ in Eq.~\eqref{eq:D_along_bounds} and the expression for $T_{\rm val}$ in Eq.~\eqref{eq:validity_bn} to the two-dimensional problem. They become
\begin{widetext}\begin{eqnarray}
&\varepsilon&\leqslant \frac{\hbar}{\mathsf{E} T}\left(\frac{\left|\bra{\phi_1(1)}\dot{H}(1)/\mathsf{E}\ket{\phi_0(1)}\right|}{\Delta^2(1)}+
\frac{\left|\bra{\phi_1(0)}\dot{H}(0)/\mathsf{E}\ket{\phi_0(0)}\right|}{\Delta^2(0)}\right) + O\!\left(\frac1{T^2}\right)\label{eq:errsearch}\\
&T_{\rm val} &= C  \frac{\hbar}{\mathsf{E}} \left|\frac{{\rm Im}\left(b_1^{*(1)}(1)\ b_1^{(2)}(1)\right)}{\left| b_1^{(1)}(1)\right|^2}\right|
=C  \frac{\hbar}{\mathsf{E}} \left|{\rm Im}\left(\frac{b_1^{(2)}(1)}{b_1^{(1)}(1)}\right)\right|.\label{eq:tvalsearch}
\end{eqnarray} \end{widetext}
The coefficients $b_1^{(1)}(1)$ and $b_1^{(2)}(1)$ can be obtained from Eqs.~\eqref{eq:b1} and \eqref{eq:b2}, respectively. They read
\begin{eqnarray}
b_1^{(1)}(1)&=&  e^{iT\omega_{10}(1)}\lambda_{10}(1)-\lambda_{10}(0)\label{eq:b1search}\\
b_1^{(2)}(1)&=&e^{iT\omega_{10}(1)}\left[J_0(1)\lambda_{10}(1)+\dot{\lambda}_{10}(1)\right]
								-\left[J_1(1)\lambda_{10}(0)+\dot{\lambda}_{10}(0)\right],
\label{eq:b2search}
\end{eqnarray}
where
\begin{eqnarray}
\lambda_{10}(s)&=&\frac{\braket{\phi_1(s)}{\dot{\phi}_0(s)}}{\Delta(s)}=-\frac{\bra{\phi_1(s)}\dot{H}(s)/\mathsf{E}\ket{\phi_0(s)}}{\Delta^2(s)}=\frac{\sqrt{N-1}}{N}\frac{\dot{f}(s)}{\Delta^3(s)}\label{eq:lambdasearch}\\
J_0(1)&=&-J_1(1)=\int_0^1\frac{\left|\braket{\phi_1(s)}{\dot{\phi}_0(s)}\right|^2}{\Delta(s)}\, ds\,,
\label{eq:j1search}\end{eqnarray}
and we have used that $\Delta(0)=\Delta(1)=1$. In the two interpolation examples which we shall consider in the following, the properties of $f(s)$ are such that $\lambda_{10}(0)=\lambda_{10}(1)$ and $\dot{\lambda}_{10}(0)=-\dot{\lambda}_{10}(1)$. For these most common choices of $f(s)$, we have
\begin{eqnarray}
b_1^{(1)}(1)&\!=&  \lambda_{10}(1)\left(e^{iT\omega_{10}(1)}-1\right)\label{eq:b1search2}\\
b_1^{(2)}(1)&\!=&\!\left(J_0(1)\lambda_{10}(1)+\dot{\lambda}_{10}(1)\right)\!\!\left(e^{iT\omega_{10}(1)}+1\right)\!.
\label{eq:b2search2}
\end{eqnarray}
Notice that for $T=T_n=2n\pi/\omega_{10}(1)$, where $n=1,2,\cdots$, the coefficient $b_1^{(1)}(1)$ vanishes. This means that at those specific run times
the leading term of $D[\ket{\Psi(1,T)},\ket{\phi_0(1)}]$ in Eq.~\eqref{eq:DpowersTupto2}, which is of the order of $1/T_n$, vanishes and the leading term becomes of the order of $1/T_n^2$. This happens because of a boundary symmetry of the Hamiltonian $H(s)$,  which is manifest in the condition $\lambda_{10}(1)=\lambda_{10}(0)$. The use of such classes of symmetries  as a technique to improve the scaling of the error $\varepsilon$ with $T$, at certain discrete values of $T$, was proposed in Ref.~\cite{Wiebe2012} and applied to the adiabatic search problem. While the authors of Ref.~\cite{Wiebe2012} could only confirm the improvement of the scaling of $\varepsilon$ with $T$ from numerical evidence, we can easily calculate the error $\varepsilon(T_n)$ at the times $T_n$ via Eqs.~\eqref{eq:DpowersTupto2},\eqref{eq:D_along},\eqref{eq:b1search2} and \eqref{eq:b2search2}:
\begin{eqnarray}
\varepsilon(T_n)=&-&\frac{\hbar^2}{\left(\mathsf{E} T_n\right)^2}\frac{{\rm Im}\left(b_1^{*(1)}(1)b_1^{(2)}(1)\right)}{\left| b_1^{(1)}(1)\right|}+ O\!\left(\frac1{T_n^3}\right)\\
=&2&\frac{\hbar^2}{\left(\mathsf{E} T_n\right)^2}\left(J_0(1)\lambda_{10}(1)\!+\!\dot{\lambda}_{10}(1)\right)+ O\!\left(\frac1{T_n^3}\right).
\end{eqnarray}
Since the change in the scaling of $\varepsilon$ will happen only at discrete values of $T$, we shall ignore this oscillatory behavior and use an upper bound to the value of $\varepsilon$ given by Eq.~\eqref{eq:errsearch}, which can be rewritten as
\begin{equation}
\varepsilon\leqslant \frac{\hbar}{\mathsf{E} T}2\left|\lambda_{10}(1)\right|+O\!\left(\frac1{T^2}\right),
\label{eq:errsearch2}\end{equation}
where we made use of Eq.~\eqref{eq:lambdasearch} and of the relation $\lambda_{10}(1)=\lambda_{10}(0)$.
By the same token, we will disregard another zero-measure set of times $T$, that of when $b_n^{(2)}(1)=0$.
In order to correctly handle the oscillatory behaviors of $b_1^{(1)}(1)$ and $b_1^{(2)}(1)$  in the evaluation of $T_{\rm val}$, we shall take the contributions of the oscillatory terms to $b_1^{(1)}(1)$ and $b_1^{(2)}(1)$ in Eqs.~\eqref{eq:b1search2} and \eqref{eq:b2search2} at their point of maximum absolute value.
 Under these conditions, we reach
 \begin{equation}
 T_{\rm val} =  \frac{\hbar}{\mathsf{E}}\, C\! \left| J_0(1)+\frac{\dot{\lambda}_{10}(1)}{\lambda_{10}(1)}\right|.
 \label{eq:tvalsearch2}\end{equation}
 
We are now in a position to derive explicit trade-off relations between the error $\varepsilon$ and the run time $T$. Let us first consider the second term in Eq.~\eqref{eq:tvalsearch2}. This term depends only on the boundary conditions of the problem, that is, it depends only on the value of the parameters for $s=0$ and $s=1$. Making use of Eqs.~\eqref{eq:gapgrover} and \eqref{eq:lambdasearch}, we can write 
\begin{equation}
\frac{\dot{\lambda}_{10}(1)}{\lambda_{10}(1)}=\frac{\ddot{f}(1)}{\dot{f}(1)}-6\left(\frac{N-1}{N}\right)\dot{f}(1).
\label{eq:lambdotlamb}\end{equation}

 Let us now consider more closely the term $J_0(1)$. Using Eq.~\eqref{eq:j1search}, $J_0(1)$ can be rewritten as
\begin{equation}
J_0(1)=\frac{1}{4}\int_0^1\frac{\mathcal{F}_{\phi_0}(s)}{\Delta(s)} ds\,,
\label{eq:j1fisher}\end{equation}
where
\begin{eqnarray}
\mathcal{F}_{\phi_0}(s)&=&4\left|\braket{\phi_1(s)}{\dot{\phi}_0(s)}\right|^2
=4\frac{\left|\bra{\phi_1(s)}\dot{H}(s)/\mathsf{E}\ket{\phi_0(s)}\right|^2}{\Delta^2(s)}
=\left[2\frac{\sqrt{N-1}}{N}\frac{\dot{f}(s)}{\Delta^2(s)}\right]^2
\label{eq:fishersearch}\end{eqnarray}
is the  quantum Fisher information of the instantaneous ground state $\ket{\phi_0(s)}$ with respect to $s$. 
As stated previously,  $\sqrt{\mathcal{F}_{\phi_0}(s)}$ is proportional to the instantaneous speed of separation between two neighboring states $\ket{\phi_0(s)}$ and $\ket{\phi_0(s+ds)}$  when using the Fubini-Study metric \cite{Braunstein1994,Taddei2013}. That is,  $\sqrt{\mathcal{F}_{\phi_0}(s)}$ is proportional to the instantaneous speed of the ground state $\ket{\phi_0(s)}$ when moving along the path joining the states $\ket{\phi_0(0)}$ and $\ket{\phi_0(1)}$ in the state space. This shows that $J_0(1)$ is a geometrical quantity, in a differential geometric sense. It is natural, then, to investigate the properties of the path, determined by $f(s)$, which minimizes $J_0(1)$ and, consequently, $T_{\rm val}$.  For this purpose, consider the quantity
\begin{equation}
K=\frac{1}{4}\int_0^1 \mathcal{F}_{\phi_0}(s)\, ds\, ,
\label{eq:actionsearch}\end{equation}
which is called the action of the path followed by $\ket{\phi_0(s)}$ when moving from $\ket{\phi_0(0)}$ to $\ket{\phi_0(1)}$. Using the Cauchy-Schwarz inequality, it is straightforward to show that
\begin{equation}
K\geqslant \left(\int_0^1 \sqrt{\mathcal{F}_{\phi_0}(s)}/2 \, ds\right)^2=\mathcal{L}^2,
\end{equation}
where $\mathcal{L}$ is the Bures length, as defined by Uhlmann~\cite{Uhlmann1992}, of the  path in the state space followed by $\ket{\phi_0(s)}$. Equality in the above expression occurs when the integrand is constant, that is, when $\mathcal{F}_{\phi_0}(s)$ is constant. We shall minimize $J_0(1)$ by minimizing the action $K$. This happens when
\begin{equation}
\sqrt{\frac{\mathcal{F}_{\phi_0}(s)}{4}}=\mathcal{L}_{\rm sh}\!=\! D[\ket{\phi_0(0)},\ket{\phi_0(1)}] = \arccos{\left(1/\sqrt{N}\right)},
\label{eq:fisherminimunaction}\end{equation}
where $\mathcal{L}_{\rm sh}$ is the Bures length of the shortest path between $\ket{\phi_0(0)}$ and $\ket{\phi_0(1)}$. By definition, it is equal to the distance $D$ [Eq.~\eqref{eq:def_Bures}] between $\ket{\phi_0(0)}$ and $\ket{\phi_0(1)}$ --- i.e. between $\ket{\sigma}$ and $\ket{m}$ ---, which is in fact attained by this interpolation.
  Notice that this choice of $\mathcal{F}_{\phi_0}(s)$ forces the state $\ket{\phi_0(s)}$ to move with constant speed along a geodesic in the state space. From Eq.~\eqref{eq:fishersearch}, one can see that, in order to maintain $\mathcal{F}_{\phi_0}(s)$ constant  along the whole evolution, the rate of change of the Hamiltonian $H(s)$ has to be adapted to the local value of the gap $\Delta(s)$. When the gap decreases, the Hamiltonian $H(s)$ has to change slower, whereas, when the gap increases,  $H(s)$ changes faster. 

We look now for the interpolation $f(s)$ which satisfies condition~\eqref{eq:fisherminimunaction}. Using Eq.~\eqref{eq:fishersearch}, one can show that, in order to satisfy condition~\eqref{eq:fisherminimunaction}, $f(s)$ has to be the solution of 
\begin{equation}
\dot{f}(s)=\frac{N}{\sqrt{N-1}}\arccos{\left(1/\sqrt{N}\right)}\Delta^2(s),
\label{eq:condminact}\end{equation}
with the conditions $f(0)=0$ and $f(1)=1$ [notice that $\Delta$ depends on $s$ through $f(s)$, see Eq.\eqref{eq:gapgrover}]. The solution is given by
\begin{equation}
f(s)=\frac{1}{2}\left[1+\frac{1}{\sqrt{N-1}}\tan{\left(\arctan{\left(\sqrt{N-1}\right)} (2s-1)\right)}\right],
\label{eq:fminaction}\end{equation}
where we used $\arccos{(1/\sqrt{N})}=\arctan{\sqrt{N-1}}$. This interpolation was first proposed in Ref.~\cite{Roland2002} as a solution to a ``local adiabatic condition'' and was found to be optimal. More recently,  in Refs.~\cite{Rezakhani2009a,Rezakhani2010,Rezakhani2010a}, the authors used differential geometric techniques to obtain the same interpolation as an adiabatic geodesic on a Hamiltonian-parameter space, unlike our state-space geometric approach.

We now determine the value of $T_{\rm val}$ when the interpolation in $H(s)$ is given by Eq.~\eqref{eq:fminaction}. We use the auxiliary relations 
\begin{subequations}
\begin{align}
\frac{\dot{\lambda}_{10}(1)}{\lambda_{10}(1)}=&-2\arccos{\left(1/\sqrt{N}\right)}\sqrt{N-1} 
\label{eq:lambdotlamb2} \\
J_0(1)=&\arccos{\left(1/\sqrt{N}\right)}\sqrt{N-1} \ , \label{eq:J0arccos}
\end{align}
the former obtained from Eqs.~\eqref{eq:lambdotlamb} and \eqref{eq:condminact}, the latter using Eq.~\eqref{eq:condminact} for a change of variables in the integration. \end{subequations} Substituting these in Eq.~\eqref{eq:tvalsearch2}, we get finally 
\begin{equation}
 T_{\rm val} =  \frac{\hbar}{\mathsf{E}}\, C\arccos{\left(1/\sqrt{N}\right)}\sqrt{N-1}.
 \label{eq:tvalsearchminact}\end{equation}

The upper bound to the error $\varepsilon$ [Eq.~\eqref{eq:errsearch2}], for the optimal interpolation~\eqref{eq:fminaction}, can be straightforwardly calculated with the use of Eq.~\eqref{eq:lambdasearch}:
\begin{equation}
\varepsilon\leqslant \frac{\hbar}{\mathsf{E} T}\,2\arccos{\left(1/\sqrt{N}\right)}+O\!\left(\frac1{T^2}\right).
\label{eq:errsearchminact}\end{equation}
Substituting $T_{\rm val}$ in the above expression, we obtain $\tilde{\varepsilon}$ [Eq.~\eqref{eq:epsilontilde}], which is the upper bound to $\varepsilon$ at $T=T_{\rm val}$ and establishes the largest value of $\varepsilon$ for which the error-run-time relations in Eqs.~\eqref{eq:errsearch}, \eqref{eq:errsearch2} and \eqref{eq:errsearchminact} are still valid:
\begin{equation}
\tilde{\varepsilon}=\frac{2 }{C\sqrt{N-1}}.
\label{eq:etildesearchminact}\end{equation}

With the important results of Eqs.~\eqref{eq:tvalsearchminact},\eqref{eq:etildesearchminact} at hand, it is important to stress the meanings of $T_{\rm val}$ and $\tilde{\varepsilon}$. Since  bounds on the error $\varepsilon$ like those in Eqs.~\eqref{eq:D_along_bounds}, \eqref{eq:errsearch}, and \eqref{eq:errsearch2} result from a truncation in first order of an expansion of the error $\varepsilon$ in powers of $1/T$, they are not valid for arbitrary values of $T$ and/or $\varepsilon$. Instead of relying on the vagueness of conditions like $T\gg1$ and/or $\varepsilon\ll1$, we are able to specify, via the value of $T_{\rm val}$, the values of $T$ for which those relations are valid, namely $T\geqslant T_{\rm val}$, and, consequently, $\varepsilon\leqslant\tilde{\varepsilon}$.

Notice that, for $N\gg1$, $\tilde{\varepsilon}\approx\frac{2}{C\sqrt{N}}=O\left(1/\sqrt{N}\right)$ and $T_{\rm val} \approx \frac{\hbar}{\mathsf{E}} C\frac\pi2\sqrt{N}=O\left(\sqrt{N}\right)$. This means that, in order to reach an error of $\varepsilon=O(1/\sqrt{N})$ with the use of the optimal interpolation, the adiabatic search algorithm will need a run time $T=O(\sqrt{N})$. This is precisely the trade-off relation between error and run time of the original Grover's algorithm~\cite{Boyer1998,Roland2003} in the limit of $N\gg1$. Until now, there had been no reports of $\varepsilon=O(1/\sqrt{N})$ in the adiabatic search, only that with $T=O(\sqrt{N})$ one can reach an error constant in $N$ \cite{Roland2002,Rezakhani2010a}. In fact, evidence that, under the optimal interpolation~\eqref{eq:fminaction}, a run time of $O(\sqrt{N})$ leads to a asymptotically small error $\varepsilon$ has been only numerical~\cite{Albash2016}. Here, we show analytically that this indeed happens, besides presenting a tight bound on the value of $\varepsilon$. 
 
We can contrast these results with previous findings in the literature. The general bounds from \cite{Ambainis2004,Jansen2007,OHara2008} are very rigorous and always valid, but because they do not intend to be tight, they overestimate the run time necessary for a given error (or the error made at a given run time). For instance, Jansen \emph{et al}'s bound \cite[Theorem 3]{Jansen2007}, the tightest of the three, yields asymptotically $\varepsilon\leqslant (\pi/2+\pi^2)\sqrt N/T$. This bound guarantees that in run time $T=O(\sqrt N)$ an error $\varepsilon$ constant in $N$ or lower is attained. This might be sufficiently good for some applications, but we know from Eq.\ (\ref{eq:etildesearchminact}) that the error is much lower for such time, scaling as $\varepsilon=O(1/\sqrt N)$ just like in Grover's algorithm. Conversely, Jansen \emph{et al}'s bound only ensures an error $\varepsilon=O(1/\sqrt N)$ after an overestimated run time of $T=O(N)$.

\begin{figure}%
\includegraphics[width=.55\columnwidth]{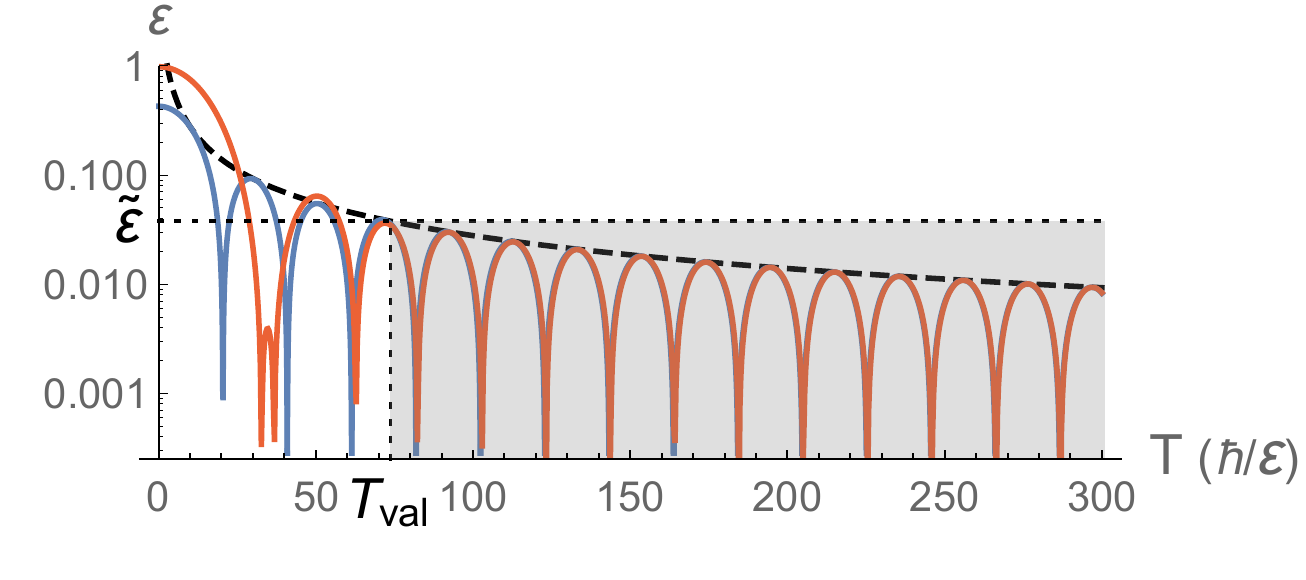}%
\caption{Relation between error and total run time for the adiabatic search using optimal interpolation \eqref{eq:fminaction}, which keeps the speed of the eigenstate (as measured by its quantum Fisher information) constant,  with $N=32$. The leading term, calculated directly from \eqref{eq:D_along} (solid, light gray) is tightly upper-bounded by Eq.\eqref{eq:errsearchminact} (black, dashed). Along with a direct numerical calculation of the distance (solid, black), other results in the literature are shown (dot-dashed): Jansen \emph{et al}'s \cite[Theorem 3]{Jansen2007} (red, above), Roland \emph{et al}'s \cite[Eq.(20)]{Roland2002} (green, below). Rezakhani \emph{et al}'s bound for the short-time regime (blue, dashed) is also to be seen \cite[Eq.(159)]{Rezakhani2010a}. Bounds by Ambainis \emph{et al} \cite[Lemma 3.3]{Ambainis2004} and O'Hara \emph{et al} \cite[Eq.(5)]{OHara2008} are not sufficiently close to the actual values to fit this plot. The upper bound $\tilde\varepsilon$ comes from Eq.\eqref{eq:etildesearchminact}, and it suffices to set $C=9.5$ in Eq.\eqref{eq:tvalsearchminact} for convergence.
}%
\label{fig:constantFisher}%
\end{figure}

 Furthermore, the use of relations between $\varepsilon$ and $T$, like Eq.~\eqref{eq:errsearchminact}, without a clear statement about their definition and domain of validity can be problematic. Roland \emph{et al}~\cite{Roland2002, Roland2003} have taken the optimal interpolation~\eqref{eq:fminaction} and showed that for large $N$ it takes a time $T=(\pi/2)\sqrt N/\varepsilon$ to run such algorithm, where $\varepsilon$ is a parameter attached to the local adiabatic condition with no well-defined relation to the final error. If one tries to naively use this relation as an error-run-time trade-off, one arrives at incorrect results. Although the authors in~\cite{Roland2002, Roland2003} are content to demand a constant $\varepsilon\ll1$, for which no glaring inaccuracy appears, using their relation to seek an error $\varepsilon=O(1/\sqrt{N})$ yields the same overestimation as in the preceding paragraph. However, this is particularly troublesome here, since the authors~\cite{Roland2002,Roland2003} interpret these results as a calculation of the error, not an upper bound as in \cite{Ambainis2004,Jansen2007,OHara2008}.

Rezakhani \emph{et al}'s treatment of the adiabatic search \cite{Rezakhani2010a} also encounters incorrect results, which are readily recognized as such by the authors --- this time, due to lack of validity conditions. When deriving bounds specifically for the adiabatic search problem, the authors of \cite{Rezakhani2010a} discern two regimes: a polynomial one for long run times $T$ and an exponential regime for shorter run times. The former applied to the optimal interpolation coincides with our upper bound Eq.\ \eqref{eq:errsearchminact}, but here is where the validity regime plays a crucial role: by using $\varepsilon$ as a free parameter with no constraints, \cite{Rezakhani2010a} obtains incorrect run times. The situation is equivalent to taking our bound \eqref{eq:errsearchminact} and applying it to an (invalid) error $\varepsilon$ constant in $N$: this gives a run time asymptotically constant in $N$, which is clearly incorrect. The authors of \cite{Rezakhani2010a} recognize the incorrectness but, without validity conditions at their disposal, make a general claim that the polynomial regime is unreliable to obtain a trade-off. We demonstrate the contrary, since once inside the validity regime, proper relations are obtained, such as $T=O(\sqrt N)$ for $\varepsilon=O(1/\sqrt N)$. Rezakhani \emph{et al} resort to the exponential regime as a source of trade-off relations between $\varepsilon$ and $T$. This is a limited approach for a twofold reason. Firstly, even confined to a single application of AQC (the adiabatic search),  this regime does not exist for all interpolations $f(s)$ \cite{Rezakhani2010a}. Secondly, the exponential bound is never tight (Fig.\ref{fig:constantFisher}), leading to overestimations of the run time such as $T=\sqrt N\log N$ for $\varepsilon=O(1/\sqrt N)$.

The expansions in orders of $1/T$ found in \cite{MacKenzie2006,Cheung2011,Wiebe2012} have many merits, arriving at relations equivalent to Eqs.~(\ref{eq:b1},\ref{eq:bp1}), but are also subject to the lack of validity conditions. Attempting to extend their asymptotic results to finite times $T$ to derive trade-off relations falls into the same trap mentioned in the previous paragraph: without clear validity conditions, incorrect results would appear. Cheung \emph{et al} \cite{Cheung2011} additionally obtain bounds of general validity; these, however, are only tight asymptotically (without a clear condition for tightness), and suffer from the same limitations as the previous general bounds, overestimating run times and/or errors.

We have gathered many of the results above for the distance between the instantaneous ground state $\ket{\phi_0(T)}$ and the physical state $\ket{\Psi(1,T)}$, and compared them to our expressions as well as to the distance given by a numerical solution to the Schr\"odinger equation under an optimally driven Hamiltonian [Eq.~\eqref{eq:fminaction}]. This comparison is shown in Fig.\ref{fig:constantFisher}, where one sees that our error value, calculated directly from Eq.~\eqref{eq:D_along}, is very accurate, that our upper bound Eq.~\eqref{eq:errsearchminact} is tight when valid, whereas previous upper bounds never are.

\begin{figure*}%
\includegraphics[width=.95\columnwidth]{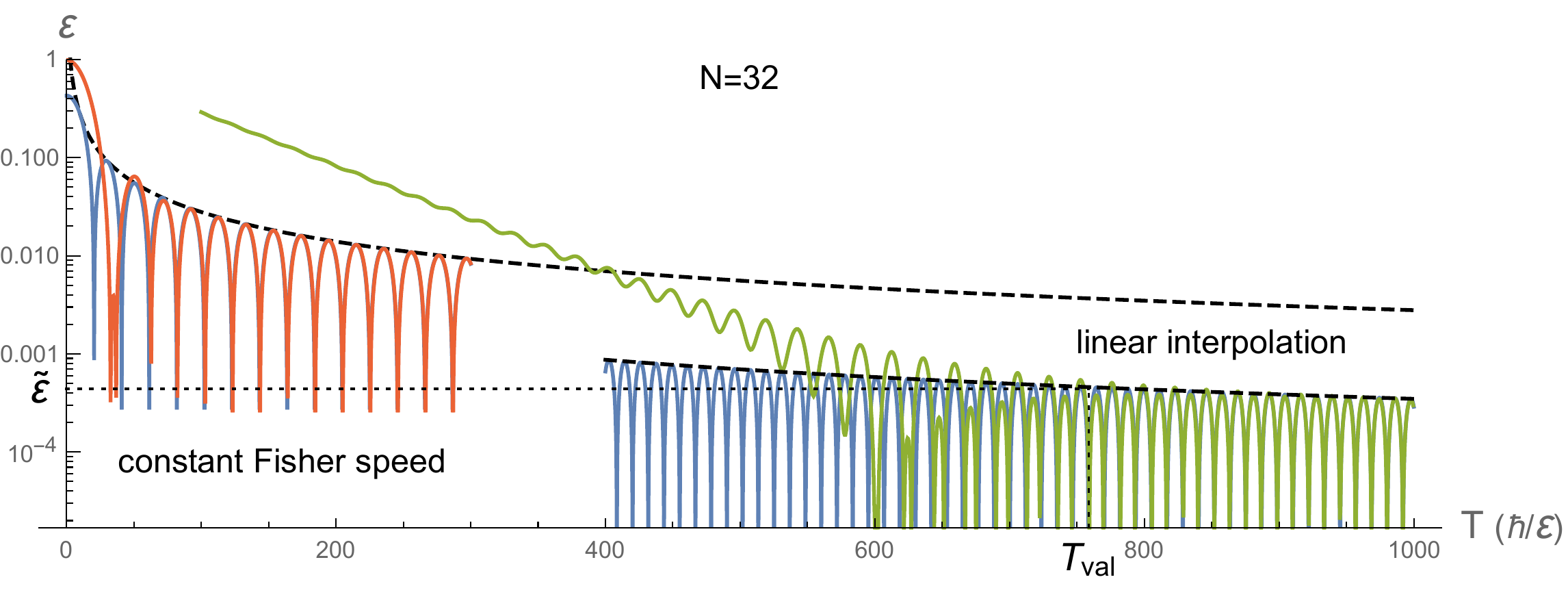}%
\caption{Comparison of error values of the adiabatic search for the linear ($f(s)=s$) and constant-Fisher-speed interpolations. For the latter, parameters are the same as in Fig.\ref{fig:constantFisher} and the oscillation pattern repeats regularly for higher values of $T$ (omitted for $T>300$ for cleanness, where only the bound is shown). For the former, the bound from Eq.\eqref{eq:errsearchlin} (black, dashed) limits the oscillations of the leading-term expression [blue, obtained directly from Eq.\eqref{eq:D_along}], to which the numerical calculation of the distance (green) converges for long enough run time $T$ [$T\geqslant T_{\rm val}$ with $C=50$ in Eq.\eqref{eq:tvalsearchlin}]. Notice that with the linear interpolation the error converges to a smaller value, but requires a longer run time $T$ to do so; for shorter run times, the optimal interpolation \eqref{eq:fminaction} produces lower errors.}%
\label{fig:Fisher_and_linear}%
\end{figure*}

We turn our attention now to the linear interpolation presented in the first proposal of the adiabatic search algorithm~\cite{Farhi2000}. Although it has been shown that this interpolation leads to a run time of $O(N)$ and, for this reason, did not bring any advantage when compared to a classical search  algorithm, it remains the unsolved issue of the value of the error $\varepsilon$ reached at this run time.

For this interpolation, $f(s)=s$. We begin with the  determination of  $T_{\rm val}$. We use the auxiliary relations
%Using, we obtain
\begin{subequations}\begin{equation}
\frac{\dot{\lambda}(1)}{\lambda(1)}=-6\left(\frac{N-1}{N}\right) \ ,
\label{eq:lambdotlamblin}\end{equation}
obtained from Eq.~\eqref{eq:lambdotlamb}, $\dot{f}(s)=1$, and $\ddot{f}(s)=0$; and 
\begin{equation}
J_0(1)=\!\left(\frac{N-1}{N}\right)\!\int_0^1\!\frac{ds}{\Delta^5(s)}=\frac{1}{3}\left(\frac{N-1}{N}\right)+\frac{2}{3}\left(N-1\right) \ ,
\end{equation}\end{subequations}
due to Eqs.~\eqref{eq:j1fisher} and \eqref{eq:fishersearch}. Putting these two expressions into Eq.~\eqref{eq:tvalsearch2} produces finally 
\begin{equation}
 T_{\rm val} =  \frac{\hbar}{\mathsf{E}}\, C\frac{1}{3}\left|2\left(N-1\right)-17\left(\frac{N-1}{N}\right)\right|.
 \label{eq:tvalsearchlin}\end{equation}
The upper bound to the error $\varepsilon$ [Eq.~\eqref{eq:errsearch2}]  can be obtained with the use of Eq.~\eqref{eq:lambdasearch}:
\begin{equation}
\varepsilon\leqslant \frac{\hbar}{\mathsf{E} T}\frac{2\sqrt{N-1}}{N}+O\!\left(\frac1{T^2}\right).
\label{eq:errsearchlin}\end{equation}
Substituting $T_{\rm val}$ in the above expression, we obtain $\tilde{\varepsilon}$ [Eq.~\eqref{eq:epsilontilde}], which is the upper bound to $\varepsilon$ at $T=T_{\rm val}$:
\begin{equation}
\tilde{\varepsilon}=\frac{6 }{C\left|(2N-17)\sqrt{N-1}\right|}.
\label{eq:etildesearchlin}\end{equation}
 
Now we can analyze the trade-off relation from Eqs.~\eqref{eq:tvalsearchlin},\eqref{eq:etildesearchlin}. For $N\gg1$, $\tilde{\varepsilon}\approx3/(CN^{3/2})=O(1/N^{3/2})$ and $T_{\rm val}\approx\frac{\hbar}{\mathsf{E}} C2N/3=O(N)$. This means that, to reach an error $\varepsilon=O(1/N^{3/2})$, the use of the linear interpolation requires a run time $T=O(N)$. This is the well known linear scaling of the run time with $N$ for the linear interpolation. It has been derived already in Refs.~\cite{Farhi2000,Roland2002} as $T=O(N/\varepsilon)$ and in Ref.~\cite{Rezakhani2010a} as $T=O(N\ln{(1/\varepsilon}))$. Notice that, if one requires an error $\varepsilon=O(1/N^{3/2})$, both these relations predict a run time very different from $O(N)$, showing, once more, the problem of having the error $\varepsilon$ as a free parameter. Fig.\ref{fig:Fisher_and_linear} compares our analytical results with the corresponding quantities produced by a numerical solution of the Schr\"odinger equation, confirming the validity of our results.

Let us compare the linear and optimal interpolations. For fixed value of $N$, the linear interpolation has a larger value of $T_{\rm val}$ than the optimal one. 
%This means that, when the relations derived here for the linear interpolation are valid, the same happens for the relations derived for interpolation~\eqref{eq:fminaction}. 
However, we can ask about the run time required by the optimal interpolation~\eqref{eq:fminaction} to reach the error value that the linear interpolation reaches with a run time $T=O(N)$. This can be easily answered with the use of Eq.~\eqref{eq:runtimealpha}, with $\alpha$ set as the ratio of the values of $\tilde{\varepsilon}$ for both interpolations.
 For $N\gg1$, the optimal interpolation reaches such error after time
 \begin{equation}
 T=\frac{T_{\rm val}}{\alpha}=\frac{2N}{3}T_{\rm val}= \frac{\hbar}{\mathsf{E}}\, C\frac{\pi N^{3/2}}{3} \ ,
 \label{eq:ratiolinopt}\end{equation}
where $T_{\rm val}$ comes from Eq.~\eqref{eq:tvalsearchminact} (optimal interpolation). Interestingly, this time is $\sqrt{N}$ larger than the time required by the linear interpolation (cf. Fig.\ref{fig:Fisher_and_linear}). Interpolation~\eqref{eq:fminaction} is optimal because of its optimal scaling with $N$ for any asymptotically small error, but this result shows that different goals (such as a particularly low error scaling) require different strategies.

%, although interpolation~\eqref{eq:fminaction} is considered optimal in the sense that its run time scales with $\sqrt{N}$. This shows that optimality of a given interpolation has to be stated carefully, taking into account the final error one wants to reach.

We consider now Hamiltonian evolutions obeying the boundary-cancelation condition~\eqref{eq:conds_boundarycancel}. A Hamiltonian $H(s)$ which obeys this condition can be built with a function $f(s)$ obeying $f^{(j)}(0)=0=f^{(j)}(1)$ for $j=1,...,p$. The minimal-order polynomial that satisfies these conditions up to order $p$ can be written as \cite{Rezakhani2010a,Wiebe2012}
\begin{equation}
f_p(s):=\frac{\int_0^s x^p(1-x)^pdx}{\int_0^1 x^p(1-x)^pdx}=I_s(p+1,p+1) \ ,
\label{eq:fp}
\end{equation}
where $I_s(p+1,p+1)$ is the regularized incomplete $\beta$ function. Notice that $f_p(s)$ satisfies the following relations:
\begin{equation}
f_p^{(p+1)}\!(0)\!=\!(-1)^pf_p^{(p+1)}(1);\ \ \ \ \ f_p^{(p+2)}(0)\!=\!(-1)^{p+1}f_p^{(p+2)}(1).
\end{equation}
 These relations imply that
\begin{equation}
 \lambda_{10}^{(p)}(1)=(-1)^p\lambda_{10}^{(p)}(0)=\frac{\sqrt{N-1}}{N} f_p^{(p+1)}(1).
 \end{equation}
 Now, using the results of Sec.~\ref{sec:boundarycancelation}  and proceeding in the same way as in the previous cases, it is straightforward to show that
  \begin{equation}
 T_{\rm val} =  \frac{\hbar}{\mathsf{E}}\, C\! \left| J_0(1)+\frac{f_p^{(p+2)}(1)}{f_p^{(p+1)}(1)}\right|=
 \frac{\hbar}{\mathsf{E}}\, C\! \left| J_0(1)\!+\!p(p+1)\right|,
 \label{eq:tvalboundary1}\end{equation}
 with
 \begin{align}
 J_0(1)&=\frac{N-1}{N^2}\int_0^1\frac{\dot{f}_p^2(s)}{\Delta^5(s)}\,ds %\label{eq:j1boundary}
 \approx \frac N2 \left(1+\sqrt p +\frac p{20}\right) \ , \label{eq:j1approx}
\end{align}
where the $O(N)$ dependence is valid in general for $N\gg1$, and the approximate value including the dependence on $p$ is correct up to 5\% error. One can then write:
 \begin{equation}
 T_{\rm val} \approx  \frac{\hbar}{\mathsf{E}}\, C\! \left| \frac N2 \left(1+\sqrt p +\frac p{20}\right)\!+\!p(p+1)\right|.
 \label{eq:tvalboundary2}\end{equation}
 Notice that $T_{\rm val}$ for this case has the same scaling with $N$ as $T_{\rm val}$ for the linear interpolation case.
 
\begin{figure*}[htb]%
\includegraphics[width=.95\columnwidth]{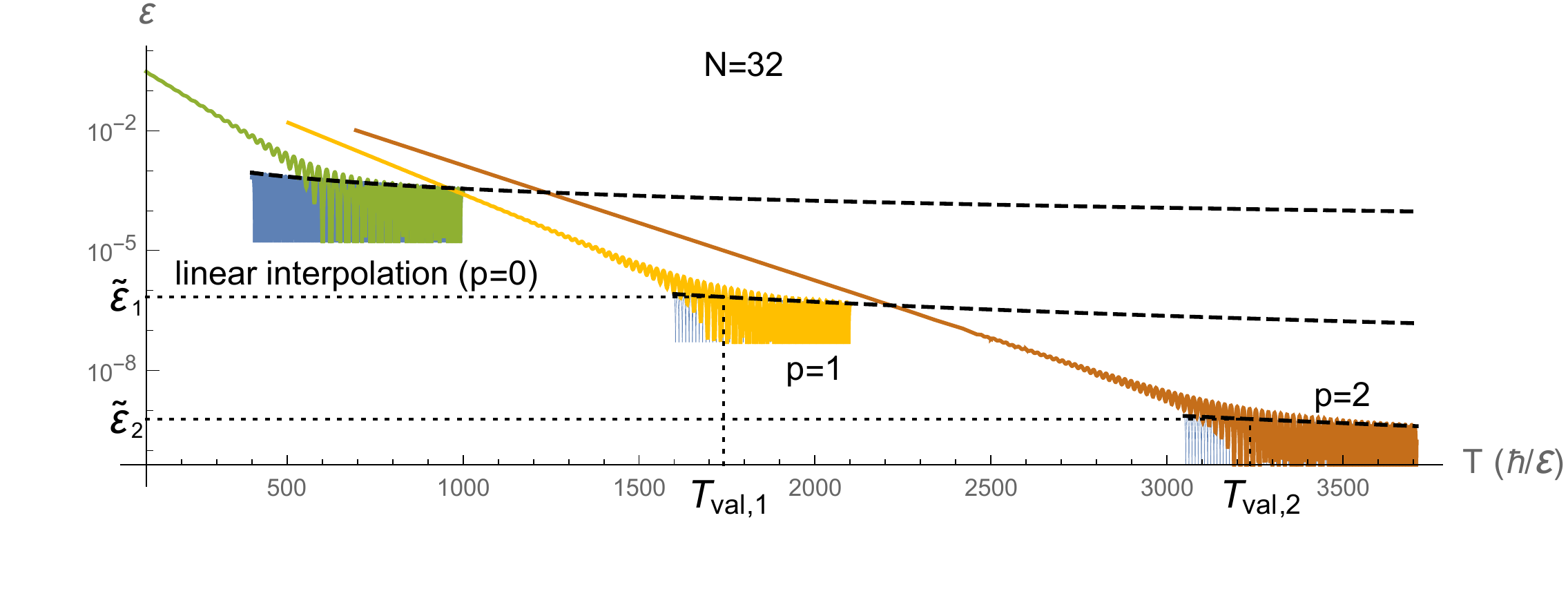}%
\caption{Behavior of the error in adiabatic search using boundary cancelation. Upper bounds [from Eq.\eqref{eq:errboundary}], distance up to leading order [from Eq.\eqref{eq:D_along}] and direct numerical calculation are displayed as in Fig.~\ref{fig:Fisher_and_linear}. With each successive order $p$ a smaller error value is attained, as seen in Eq.\eqref{eq:epsilontildeboundary}, at the cost of a longer $T_{\rm val}$. These validity times are obtained from Eq.\eqref{eq:tvalboundary2} by setting $C=50$ for $p=1$ and $C=70$ for $p=2$.}%
\label{fig:boundarycancelation}%
\end{figure*}

 The upper bound on the error $\varepsilon$ can be obtained via the same proceeding:
 \begin{align}
\varepsilon\leqslant \frac{\hbar^{p+1}}{(\mathsf{E} T)^{p+1}}\,\frac{2\sqrt{N-1}}{N}\frac{(2p+1)!}{p!} \ ,
\label{eq:errboundary}\end{align}
where we used  $|f_p^{(p+1)}(1)|=(2p+1)!/p!$.
 Using Eqs.~\eqref{eq:tvalboundary2} and \eqref{eq:errboundary}, we can determine $\tilde\varepsilon$, the upper bound to $\varepsilon$ at $T=T_{\rm val}$ (for $N\gg1$):
 \begin{equation}
 \tilde{\varepsilon}\approx\frac1{C^{p+1}}\frac{2^{p+2}(2p+1)!}{(1+\sqrt p+p/20)^{p+1}p!}\frac1{N^{p+3/2}}.
 \label{eq:epsilontildeboundary}\end{equation}
  Therefore, in order to reach an error $\varepsilon=O\left(N^{p+3/2}\right)$, the use of an interpolation which satisfies the  boundary-cancelation condition~\eqref{eq:conds_boundarycancel} requires a run time $T=O(N)$, with a prefactor increasing with $p$. This means that this interpolation reaches a much smaller final error than the linear one, but requires a longer run time, although with the same scaling in $N$.	Fig. \ref{fig:boundarycancelation} compares the behavior of the error $\varepsilon$ with the run time $T$ for the different interpolations we have considered here.
  Notice that this also happens in comparison with the optimal interpolation~\eqref{eq:fminaction}, but to a higher degree:  for run times larger or equal to $T=O(N)$, the error reached by the current interpolation is much smaller than that reached by the optimal interpolation, at the price of a longer necessary convergence time. For application in AQC, where a constant error typically suffices and shortest time is the ultimate goal, boundary cancelation typically presents a disadvantage.%,  showing that, besides the scaling with the size of the problem, it is important to take into account the error reached by each interpolation for a given value of the run time $T$.

%%%%%%%%%%%%%%%%%%%%%%%%%%%%%%%%%%%%%%%%%%
%%%%%%%%%%%%%%%%%%%%%%%%%%%%%%%%%%%%%%%%%%
\section{Final comments}\label{sec:conclusions}

We have obtained errors of the adiabatic approximation for finite driving times $T$ beyond scaling relations.  Using the Bures angle between the actual state of the system and the instantaneous ground state of the Hamiltonian at time $T$ as a figure of merit for the error, we derived  general results for the trade-off relation between the run time $T$ and the error $\varepsilon$  in terms of quantities with very clear physical interpretations.  With the use of the Adiabatic Perturbation Theory (APT), we performed an expansion of the error $\varepsilon$ for small values of $1/T$ up to next-to-leading-order terms.  This allowed us to delimit the domain of validity of the trade-off relations obtained via leading-order approximations and, consequently, to explicitly determine the shortest time $T=T_{\rm val}$ and the largest error $\tilde\varepsilon=\varepsilon$ for which these relations are valid.  When the use of these relations is restricted to such values, they correctly give the value of the error $\varepsilon$ for a given run time $T$ or, conversely, the value of the run time $T$ required for reaching a given error $\varepsilon$. In this way, the results go beyond scaling-only relations. This approach corrects some problems present in several trade-off relations derived so far, where the error $\varepsilon$ appears as a ``free'' parameter, although those relations rely on some form of approximation, hence are valid only for a restricted range of $\varepsilon$ (and $T$).
 
 We applied these  results to the adiabatic quantum search algorithm, which, besides being of  practical importance for AQC,  admits an exact analytical treatment. With our approach, we were able to  reobtain the interpolation which is considered optimal for this algorithm  via a minimization of the functional for the shortest time $T_{\rm val}$ for which 
the relations we derived still apply. More importantly, apart from getting the right scaling of the run time with the system size $N$ --- showing for the first time that Grover-like $\varepsilon=O(1/\sqrt N)$ is attained in time $T=O(\sqrt N)$ ---, we could determine the actual value of the final error for that interpolation. Novel values (and scalings) for errors reached at a given run time were also found for other classes of interpolation, such as the linear one [$\varepsilon=O(1/N^{3/2})$ at $T=O(N)$] and the scheme named boundary cancelation [$\varepsilon=O(1/N^{p+3/2})$ at $T=O(N)$]. 
In doing so, we could show that traditional trade-off relations, as much as they can give results sufficient for practical use in famous applications (such as $\varepsilon=O(1)$ in the adiabatic search), are inherently not tight and overestimate the error and/or time for a given evolution. With the expansion of AQC it is only more likely that such overestimation misses important quantum advantages in the future.

Our results are in no way restricted to the adiabatic version of the Grover search. Applications can be made to test the known adiabatic approach to other quantum problems, such as satisfiability problems \cite{Farhi2000,Farhi2001}, the Deutsch-Josza problem \cite{Das2002,Wei2006}, the Bernstein-Vazirani problem \cite{Hen2014}, or Simon's problem \cite{Hen2014}. We believe that, in these and other cases, our approach may offer an improvement for deriving trade-off relations between error and run time for quantum adiabatic algorithms, allowing for precise estimations of the run time required for an algorithm to be performed with a given error.

%%%%%%%%%%%%%%%%%%%%%%%%%%%%%%%%%%%%%%%%%%%%%%%%%%%%%%%%%%%%%%%%%%%%%%%%%%%%%%%%%%%%%%%%%%%%%%%
%%%%%%%%%%%%%%%%%%%%%%%%%%%%%%%%%%%%%%%%%%%%%%%%%%%%%%%%%%%%%%%%%%%%%%%%%%%%%%%%%%%%%%%%%%%%%%%
\begin{acknowledgments}
The authors acknowledge financial support from the Brazilian funding agencies CNPq, CAPES and FAPERJ (grant E-26/202.801/2016), and the National Institute of Science and Technology for Quantum Information.
\end{acknowledgments}

%%%%%%%%%%%%%%%%%%%%%%%%%%%%%%%%%%%%%%%%%%
%%%%%%%%%%%%%%%%%%%%%%%%%%%%%%%%%%%%%%%%%%
\appendix{

%%%%%%%%%%%%%%%%%%%%%%%%%%%%%%%%%%%%%%%%%%
%%%%%%%%%%%%%%%%%%%%%%%%%%%%%%%%%%%%%%%%%%
\section{Expansion of the distance in the parameter $1/T$}
\label{sec:expansionD1T}

In this Appendix, we demonstrate the expansion of the Bures angle in the small parameter $1/T$ up to second nonvanishing order [Eqs.(\ref{eq:DpowersTupto2},\ref{eq:DpowersTuptop2})]. In fact, there is a two-fold dependence of the Bures angle on $T$. On one hand, there is the dependence expressed in the expansion of Eq.\eqref{eq:bn_expansion},
\begin{equation}
b_n(s,T) = \sum_{p=0}^\infty \frac{(i\hbar)^p}{T^p} \ b_n^{(p)}(s) \ .
\label{eq:bn_expansion_v}
\end{equation}
On the other hand, $b_n^{(p)}(s)$ may still depend on $T$ (through a complex exponential). We will make a perturbative expansion of $D[\ket{\Psi(s,T)},\ket{\phi_0(s)}]$ in the small parameter $\frac1T$, which should not be confused with a genuine power-series expansion. This means that the expansion is made as if $b_n^{(p)}(s)$ did not depend on $T$. We assume $b_n^{(0)}=\delta_{n0}$ (recovering the infinite-time result), and also that we calculate the distance at a time ${s}$ at which the coefficients $b_n^{(q)}({s})=0$ for $n\neq0$ and $q=1,2,...,p$. Eq. (\ref{eq:DpowersTupto2}) will correspond to $p=0$ (for any value of ${s}$), and for the boundary-cancelation case ($p\geqslant1$), this assumption will be proven for ${s}=1$ in Appendix \ref{sec:leading_coeff}.

Let us express the Bures angle between the instantaneous ground state $\left|\phi_0(s)\rangle \right.$ and the state of the system $|\psi(s,T)\rangle$. Because a truncated expression for $|\psi(s,T)\rangle$ does not preserve normalization \cite{Rigolin2008} (as usual in perturbation theories), we replace Eq.\eqref{eq:def_Bures}, written for normalized states, for an expression not assuming a normalized $|\Psi(s,T)\rangle$ \cite{Kobayashi1969,Taddei2014},
\begin{equation}
D[\ket{\Psi(s,T)},\ket{\phi_0(s)}] := \arccos 
\left(\frac{\left| \braket{\phi_0(s)}{\Psi(s,T)} \right|}{\sqrt{\braket{\Psi(s,T)}{\Psi(s,T)}}}\right) \ ,
\label{eq:def_Bures_nonnorm}
\end{equation}
where $|\phi_0(s)\rangle$ is still assumed normalized. In terms of the coefficients in Eq.\eqref{eq:Psi_expansion},
\begin{equation}
D[\ket{\Psi(s,T)},\ket{\phi_0(s)}] = \arccos 
\left(\frac1{\sqrt{1+\sum_{n\neq0}\frac{|b_n(s,T)|^2}{|b_0(s,T)|^2}}}\right) \ .
\label{eq:Bures_bn}
\end{equation}
We define $w := \sum _{n\neq0} |b_n(s,T)|^2/| b_0(s,T)|^2$, which, together with a trigonometric identity, allows us to write
\begin{equation}
D[|\Psi (s,T)\rangle ,|\phi _0(s)\rangle]  = \arctan \sqrt{w}.    
\label{eq:arctgw}
\end{equation}
We can expand \eqref{eq:arctgw} in a power series as 
\begin{equation}
\label{arctgexpanded}
D[|\Psi(s,T)\rangle,|\phi _0(s)\rangle] = w^{1/2} - \frac{1}{3} w^{3/2} + O(w^{5/2}),   
\end{equation}
which converges absolutely for $|w|< 1$  or $D[|\Psi(s,T)\rangle,|\phi _0(s)\rangle] < \pi/4 $.

From Eq.\eqref{eq:bn_expansion_v}, we find that
\begin{align}
w=&(\hbar/T)^{2p+2}
\sum_{n\neq0}\frac{\left|b_n^{(p+1)}\right|^2+\hbar/T\ 2{\rm Re}(i\ b_n^{*(p+1)}\ b_n^{(p+2)})+O(1/T^{2})}{1+2{\rm Re}(i\hbar/T \ b_0^{(1)}) + O(1/T^2)} \ ,\\
=&\frac{\hbar^{2p+2}}{T^{2p+2}}\sum_{n\neq0}\left(|b_n^{(p+1)}|^2 + \hbar/T \left[2{\rm Re}(i\ b_n^{*(p+1)}\ b_n^{(p+2)}) -\left|b_n^{(p+1)}\right|^2\ 2{\rm Re}(i\hbar/T \ b_0^{(1)})\right]  + O(1/T^2)\right) \ ,
\label{eq:w_powers_v}\end{align}
where $b_n^{(p+1)}$ is the first nonvanishing coefficient for $n\neq0$, but $b_0^{(1)}\neq0$ in general. We see that $w^{1/2}=O(1/T^{p+1})$ and the two leading terms will be of order $1/T^{p+1}$, $1/T^{p+2}$, with $w^{3/2}=O(1/T^{3p+3})$ being neglected (since $1/T^{3p+3}$ is higher-order than $1/T^{p+2}$ for any $p\geqslant0$). Then,
\begin{multline}
D[|\Psi(s,T)\rangle,|\phi _0(s)\rangle] = \frac{\hbar^{p+1}}{T^{p+1}}\sqrt{\sum_{n\neq0}\left|b_n^{(p+1)}\right|^2} \times\\
 \left\{1+2\frac\hbar T \left[\frac{\sum_{n\neq0}{\rm Re}(i\ b_n^{*(p+1)}\ b_n^{(p+2)})}{\sum_{n\neq0}\left|b_n^{(p+1)}\right|^2}  - {\rm Re}\left(i\frac\hbar T \ b_0^{(1)}\right)\right]+O\left(\frac{1}{T^2}\right)\right\}^{1/2} +O\left(\frac1{T^{3p+3}}\right) \ ,
\label{eq:Dpowersv}
\end{multline}
and finally,
\begin{multline}
D[|\Psi(s,T)\rangle,|\phi _0(s)\rangle] = \frac{\hbar^{p+1}}{T^{p+1}}\sqrt{\sum_{n\neq0}\left|b_n^{(p+1)}\right|^2}+\\ 
+ \frac{\hbar^{p+2}}{T^{p+2}} \left[\frac{\sum_{n\neq0}{\rm Re}(i\ b_n^{*(p+1)}\ b_n^{(p+2)})}{\sqrt{\sum_{n\neq0}\left|b_n^{(p+1)}\right|^2}}  - \sqrt{\sum_{n\neq0}\left|b_n^{(p+1)}\right|^2} \ {\rm Re}\left(i\frac{\hbar}T \ b_0^{(1)}\right)\right] +O\left(\frac{1}{T^{p+3}}\right) \ , 
\label{eq:Dpowersv_final}\end{multline}
which yields Eq.\eqref{eq:DpowersTupto2} for $p=0$. For $p\neq0$, Eq.\eqref{eq:DpowersTuptop2} is reached by using that $b_0^{(1)}$ is real (proven in Appendix \ref{sec:leading_coeff}).

%%%%%%%%%%%%%%%%%%%%%%%%%%%%%%%%%%%%%%%%%%
%%%%%%%%%%%%%%%%%%%%%%%%%%%%%%%%%%%%%%%%%%
\section{Leading coefficients}
\label{sec:leading_coeff}

In this Appendix, we calculate the leading coefficients of expansion \eqref{eq:bn_expansion} both for the general case, $b_n^{(1)}(s)$, and with boundary cancelation, $b_n^{(p+1)}(1)$ --- both for $n\neq0$. Along the way we demonstrate that the assumptions made in the previous Appendix are valid ($b_n^{(0)}=\delta_{n0}$ and $b_n^{(q)}({s})=0$ for $n\neq0$ up to $q=p$).

\subsection{General Case}
We now need to calculate $b_n^{(1)}(s)$ for $n\neq0$. We begin by writing the coefficients in Eq.\eqref{eq:bn_expansion} as
\begin{equation}
b_n^{(p)}(s)=\sum _{m=0} e^{iT{\omega}_{nm}(s)}b_{nm}^{(p)}(s) \ \ .
\label{eq:coeff}
\end{equation}
Inserting Eqs. (\ref{eq:bn_expansion}),(\ref{eq:coeff}) into Eq.\eqref{eq:Psi_expansion} we get
\begin{equation}\label{eq:three}
|\Psi (s,T)\rangle =\sum _{p=0}^\infty \sum_{n,m=0} \frac{(i\hbar)^p}{T^p}e^{-iT\omega_m(s)}b_{nm}^{(p)}(s)\left|\phi _n(s)\rangle\right. .
\end{equation}
At this point we impose that all the contributions higher than the zeroth order vanish at $s=0$, i.e. $b_n(0,T)=b_n^{(0)}(0)$ and 
\begin{equation}
\label{initialone}
b_n^{(p)}(0) = \sum_{m=0}b_{nm}^{(p)}(0)=0, \quad (p \geqslant 1) .    
\end{equation}
Furthermore we also impose that the zeroth order contribution is the adiabatic approximation,
\begin{equation}
\label{initialtwo}
b_n^{(0)}(s) = b_n^{(0)}(0)=\delta_{n0} \Rightarrow b_{nm}^{(0)}(s) = b_{nm}^{(0)}(0) = \delta_{n0}\delta_{nm} \ .
\end{equation}
We now substitute \eqref{eq:three} into the Schrödinger Equation [Eq.\eqref{eq:Schroedinger}] and left-multiply it by $\bra{\phi _k(s)}$, leading to the general recurrence relation 
\begin{equation}
\label{fundamental}
\Delta_{nm}(s)b_{nm}^{(p+1)}(s)=\dot{b}_{nm}^{(p)}(s) +\sum_{k\neq n}M_{nk}(s)b_{km}^{(p)}(s) ,
\end{equation}
where $M_{nk}(s) := \braket{\phi _n(s)}{\dot{\phi }_k(s)}$ and $M_{nn}(s)$ has been taken as zero without loss of generality.

Let us then calculate the terms  $b_{nm}^{(1)}(s)$, starting by the off-diagonal terms ($n\neq m$). From Eq.\eqref{initialtwo} and Eq.\eqref{fundamental} with $p=0$,
\begin{equation}
\label{first}
b_{nm}^{(1)}(s) = \frac{M_{nm}(s)}{\Delta_{nm}(s)}\delta_{m0} = \lambda_{nm}(s)\delta_{m0} \ , \quad   (n\neq m) .
\end{equation} 
The recurrence relation \eqref{fundamental} furnishes a first-order differential equation for the diagonal terms, 
\begin{equation}
\label{wanted}
\dot{b}_{nn}^{(1)}(s)+\sum_{k\neq n}M_{nk}(s)b_{kn}^{(1)}(s) = 0,
\end{equation}
which can be integrated
\begin{equation}
\label{integral}
b_{nn}^{(1)}(s) =  b_{nn}^{(1)}(0) - \int _0^s\underset{k\neq n}{\sum}M_{nk}(s')b_{kn}^{(1)}(s')ds'
\end{equation}
and, for $n\neq0$, solved using Eqs.\eqref{first}, \eqref{initialone}:
\begin{equation}
\label{zero}
b_{nn}^{(1)}(s)=b_{nn}^{(1)}(0) = -\sum_{m\neq n}b_{nm}^{(1)}(0) =  -\lambda_{n0}(0).
\end{equation}
From \eqref{eq:coeff}, \eqref{first} and \eqref{zero}, one obtains the first-order contribution $b_n^{(1)}(s)$ as in Eq.\eqref{eq:b1}, in accordance with \cite{Rigolin2008}. One then substitutes that in Eq.\eqref{eq:DpowersTupto2} and, taking into account that
\begin{equation}
M_{nm}(s) = -\frac{\langle \phi_n(s)|\dot{H}(s)|\phi _m(s)\rangle}{\Delta_{nm}(s)} \ ,
\label{eq:M_as_Hdot}
\end{equation}
Equation \eqref{eq:D_along} is obtained. (Notice that equality \eqref{eq:M_as_Hdot} is demonstrated by differentiating the eigenvalue equation.)

We take the opportunity to calculate $b_{00}^{(1)}(s)$ from \eqref{integral}, which will be necessary for a later result. We first note that, due to \eqref{initialone} and \eqref{first},
\begin{equation}
b_{00}^{(1)}(0) = - \sum_{m\neq0} b_{0m}^{(1)}(0) = 0 \ .
\label{eq:initialb001}
\end{equation}
Substituting \eqref{first} in \eqref{integral} then gives
\begin{equation}
b_{00}^{(1)}(s) =  - \int _0^s\sum_{k\neq 0}M_{0k}(s')\frac{M_{k0}(s')}{\Delta_{k0}(s')}ds'=J_0(s) \ ,
\label{eq:b001}
\end{equation}
where $M_{0k}=-M_{k0}^*$ has been used and 
\begin{equation}
J_n(s)=\sum_{k\neq n}\int_0^s \frac{|M_{kn}(s')|^2}{\Delta_{kn}(s')}ds' \ ,
\label{eq:Jn}
\end{equation}
as defined in the main text.

\subsection{Boundary cancelation}

We will introduce several Lemmas to show that $b_n^{(r)}(1)$ vanish up for $r=1,2,...,p$ and to calculate the value of $b_n^{(p+1)}(1)$.
\begin{lem}
For $n\neq m$ the general form of the coefficients $b_{nm}^{(p)}(s)$ is given by
\begin{equation}
\label{ansatz}
b_{nm}^{(p)} (s) = \sum _{q=0}^{p-1}  \sum_{k\neq m}\chi_{nmk}^{p,q}(s)\lambda _{km}^{(q)}(s) \ .
\end{equation}
where $\lambda _{km}^{(q)}(s) := \frac{d^q\lambda _{km}(s)}{ds^q}$, and $\chi _{nmk}^{p,q}(s)$ are coefficients to be obtained recursively.
\label{th:chi}\end{lem}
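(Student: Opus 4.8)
The plan is to prove Lemma~\ref{th:chi} by induction on $p$, using the recurrence relation~\eqref{fundamental} as the engine that propagates the claimed form from one order to the next. The base case is immediate: for $p=0$ the right-hand side of~\eqref{ansatz} is an empty sum (the index $q$ would range over $0\leqslant q\leqslant -1$), while $b_{nm}^{(0)}=\delta_{n0}\delta_{nm}=0$ for $n\neq m$ by~\eqref{initialtwo}, so both sides vanish. (Equivalently one may start at $p=1$, where~\eqref{first} gives $b_{nm}^{(1)}=\lambda_{nm}\delta_{m0}$, matching~\eqref{ansatz} with $\chi_{nmk}^{1,0}=\delta_{kn}\delta_{m0}$.)

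For the inductive step I would assume~\eqref{ansatz} holds at order $p$ for every off-diagonal pair. Since $n\neq m$ forces $\Delta_{nm}\neq 0$, the recurrence~\eqref{fundamental} expresses $b_{nm}^{(p+1)}$ as $\Delta_{nm}^{-1}$ times $\dot b_{nm}^{(p)}+\sum_{k\neq n}M_{nk}b_{km}^{(p)}$, and I would treat the two contributions separately. Differentiating the inductive hypothesis term by term yields $\dot\chi_{nmk}^{p,q}\lambda_{km}^{(q)}+\chi_{nmk}^{p,q}\lambda_{km}^{(q+1)}$, which is again a linear combination of derivatives $\lambda_{km}^{(q')}$ with $s$-dependent coefficients; crucially the top derivative order rises only from $p-1$ to $p$, exactly the new upper limit $(p+1)-1=p$ demanded by~\eqref{ansatz}. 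In the remaining sum I would split off the diagonal index $k=m$ (present because $m\neq n$) from the genuinely off-diagonal indices $k\neq m$; the latter are dispatched directly by the inductive hypothesis, each $b_{km}^{(p)}$ contributing terms $M_{nk}\chi_{kmk'}^{p,q}\lambda_{k'm}^{(q)}$ of the required shape with $q\leqslant p-1$. Throughout, the second index of every $\lambda$ remains equal to $m$, since both differentiation and the recurrence act at fixed $m$.

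The delicate piece — and the one I expect to be the main obstacle — is the diagonal term $k=m$, namely $M_{nm}\,b_{mm}^{(p)}$. Using $M_{nm}/\Delta_{nm}=\lambda_{nm}=\lambda_{nm}^{(0)}$, this contributes $\lambda_{nm}^{(0)}\,b_{mm}^{(p)}$, which has the form~\eqref{ansatz} with $q=0$ and $k=n$ \emph{provided} $b_{mm}^{(p)}(s)$ is accepted as a coefficient. The subtlety is that the diagonal coefficients are not themselves of the tidy $\lambda$-derivative form: because $\Delta_{mm}=0$, the recurrence~\eqref{fundamental} degenerates at $n=m$ into the constraint $\dot b_{mm}^{(p)}=-\sum_{k\neq m}M_{mk}b_{km}^{(p)}$, whose solution is an \emph{integral} of lower-order off-diagonal terms, exactly as in~\eqref{integral} and~\eqref{eq:b001}. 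I would argue that this causes no difficulty, since the statement places no restriction on the functions $\chi_{nmk}^{p,q}(s)$: such integrals are themselves well-defined functions of $s$ and can be absorbed wholesale into $\chi_{nmn}^{p+1,0}(s)$. Collecting the three contributions and multiplying by $\Delta_{nm}^{-1}$, every term is manifestly of the form $\chi\cdot\lambda_{km}^{(q)}$ with $0\leqslant q\leqslant p$, establishing~\eqref{ansatz} at order $p+1$ and simultaneously exhibiting each $\chi_{nmk}^{p+1,q}$ as an explicit integro-differential combination of the order-$p$ data $\{\chi_{nmk}^{p,q}\}$, the diagonal coefficients $b_{mm}^{(p)}$, and the geometric quantities $M_{nk}$ and $\Delta_{nm}$ — precisely the recursive determination promised in the statement.
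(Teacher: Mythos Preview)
Your proposal is correct and follows essentially the same inductive route as the paper: both start from the $p=1$ base case $\chi_{nmk}^{1,0}=\delta_{kn}\delta_{m0}$, feed the ansatz into the recurrence~\eqref{fundamental}, and handle the diagonal contribution $M_{nm}b_{mm}^{(p)}$ by absorbing $b_{mm}^{(p)}$ (left as an unrestricted function of $s$) into $\chi_{nmn}^{p+1,0}$. The paper carries the bookkeeping one step further by writing out the explicit recursion for the $\chi$'s (Eqs.~\eqref{proofseven}--\eqref{proofeight}), which you describe qualitatively but do not spell out; since those closed-form relations are used downstream (e.g.\ in Lemma~\ref{th:chideltam0} and in deriving Eq.~\eqref{quasifinaltwo}), you may wish to record them explicitly.
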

\begin{proof}
 The proof is by induction as follows: the statement holds for $p = 1$,
\begin{equation}
\label{quione}
b_{nm}^{(1)}(s) =  \sum _{q=0}^0 \underset{k\neq m
    }{ \sum  }\chi _{nmk}^{1,q}(s)\lambda _{km}^{(q)}(s),
\end{equation}
if we set $\chi_{nmk}^{1,0}(s) = \delta _{kn}\delta _{m0}$, since we reproduce \eqref{first}. Now we prove that if the statement holds for some $p$, then it holds for $(p+1)$. We substitute \eqref{ansatz} into the right-hand side of the recurrence relation \eqref{fundamental} to obtain
\begin{equation}
\label{proofone}
\Delta _{nm}(s)\ b_{nm}^{(p+1)}(s) = %\\ \nonumber
\sum_{q=0}^{p-1} \sum_{k\neq m} \left( \dot{\chi}_{nmk}^{p,q}(s)\lambda_{km}^{(q)}(s)+ 
 \chi _{nmk}^{p,q}(s)\lambda_{km}^{(q+1)}(s)  %\\ \nonumber
+\sum_{l\neq n,m} M_{nl}(s)\chi_{lmk}^{p,q}(s)\lambda_{km}^{(q)}(s)\right) +M_{nm}(s)b_{mm}^{(p)}(s),
\end{equation}
The last term in \eqref{proofone} can be rewritten as
\begin{equation}
\label{prooftwo}
M_{nm}(s)b_{mm}^{(p)}(s) 
=\sum_{q=0}^{p-1} \sum_{k\neq m} \Delta_{nm}(s) \lambda_{km}^{(q)}(s) b_{mm}^{(p)}(s)\delta_{q0}\delta_{kn},    
\end{equation}
so that this term can be reincorporated into the sum \eqref{proofone} as
\begin{equation}
\label{proofthree}
\Delta _{nm}b_{nm}^{(p+1)} = \sum _{q=0}^{p-1} \sum _{k\neq m} (\dot{\chi}_{nk}^{p,q}\lambda _{km}^{(q)} + \chi _{nk}^{p,q}\lambda _{km}^{(q+1)} + \sum _{l\neq n,m} M_{nl}\chi _{lk}^{p,q}\lambda _{km}^{(q)} + \Delta_{nm}b_{mm}^{(p)}\delta _{q0}\delta _{kn}\lambda _{km}^{(q)}),
 \end{equation}
in which we have omitted the $s$-dependence.
On the other hand, if we make $q\rightarrow q - 1$, then
 \begin{equation}
 \label{prooffour}
 \sum_{q=0}^{p-1} \sum _{k\neq m} \chi _{nmk}^{p,q}(s) \lambda_{km}^{(q+1)}(s) = %\\ \nonumber
 \sum_{q=0}^{p-1} \sum_{k\neq m} \chi_{nmk}^{p,q-1}(s)\lambda_{km}^{(q)}(s)(1-\delta_{q0})+\!\!\sum_{k\neq m} \chi_{nmk}^{p,p-1}(s)\lambda _{km}^{(p)}(s)
 \end{equation}
 which allows us to rewrite \eqref{proofthree} as
 \begin{equation}
\Delta _{nm}b_{nm}^{(p+1)} = \sum _{k\neq m} \chi _{nmk}^{p,p-1}\lambda _{km}^{(p)} +
\sum _{q=0}^{p-1} \sum _{k\neq m} \left(\dot{\chi }_{nmk}^{p,q}+ \chi_{nmk}^{p,q-1}(1-\delta_{q0}) 
+ \sum _{l\neq n,m} M_{nl}\chi _{lmk}^{p,q} + \Delta_{nm}b_{mm}^{(p)}\delta_{q0}\delta_{kn}\right)\lambda_{km}^{(q)} \ . \label{prooffive}
\end{equation}
Finally after the preceding rearrangements we can compare \eqref{prooffive} with the general form
\begin{equation}
\label{proofsix}
b_{nm}^{(p+1)}=\sum _{q=0}^p \sum _{k\neq m} \chi _{nmk}^{p+1,q}(s)\lambda _{km}^{(q)}(s),
\end{equation}
which fits perfectly if we set 
\begin{equation}
\label{proofeight}
\chi _{nmk}^{p+1,p}(s)=\frac{\chi _{nmk}^{p,p-1}(s)}{\Delta _{nm}(s)},
\end{equation}
and, for $q=0,1,...,p-1$,
\begin{equation}
\label{proofseven}     
\chi _{nmk}^{p+1,q}(s) = \frac{1}{\Delta _{nm}(s)}\left(\dot{\chi }_{nmk}^{p,q}(s) + \chi _{nmk}^{p,q-1}(s)(1-\delta_{q0})
 +\sum _{l\neq n,m} M_{nl}(s)\chi_{lmk}^{p,q}(s) +
\Delta _{nm}(s)b_{mm}^{(p)}(s)\delta_{q0}\delta _{kn}\right),
\end{equation}
which concludes the proof. 
\end{proof}
Lemma \ref{th:chi} leads to a Corollary.
\begin{cor}[Off-diagonal terms]
The condition
\begin{equation}
H^{(1)}(\tilde{s}) = H^{(2)}(\tilde{s}) = \cdots =   H^{(p)}(\tilde{s}) = 0
\label{eq:Hnullderivatives}
\end{equation}
at some time $\tilde s \in [0,1]$ implies, for $n\neq m$,
\begin{equation}
b_{nm}^{(1)}(\tilde{s}) = b_{nm}^{(2)}(\tilde{s}) = \cdots = b_{nm}^{(p)}(\tilde{s}) = 0 \ .
\label{eq:offdiagonalzero}
\end{equation}
\label{th:offdiagzero}
\end{cor}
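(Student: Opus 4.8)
The plan is to reduce the statement to the vanishing of the functions $\lambda_{km}$ together with their derivatives at $\tilde s$, and then to read off the claim from the structural form of the coefficients supplied by Lemma \ref{th:chi}. By that Lemma, for $n\neq m$ and any $1\leqslant r\leqslant p$,
\[
b_{nm}^{(r)}(\tilde s)=\sum_{q=0}^{r-1}\sum_{k\neq m}\chi_{nmk}^{r,q}(\tilde s)\,\lambda_{km}^{(q)}(\tilde s) ,
\]
so only the derivatives $\lambda_{km}^{(q)}$ of order $q\leqslant r-1\leqslant p-1$ ever enter. Hence it suffices to show that the hypothesis \eqref{eq:Hnullderivatives} forces $\lambda_{km}^{(q)}(\tilde s)=0$ for every $k\neq m$ and every $0\leqslant q\leqslant p-1$; the $\chi$-coefficients are finite (they are generated recursively from gaps, overlaps and lower-order $b_{mm}^{(p)}$, all regular under the standing non-degeneracy assumption) and therefore merely multiply zeros.

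The key is the explicit form of $\lambda_{km}$: combining its definition $\lambda_{km}=M_{km}/\Delta_{km}$ with Eq.~\eqref{eq:M_as_Hdot}, I would write
\[
\lambda_{km}(s)=-\frac{\langle\phi_k(s)|H^{(1)}(s)|\phi_m(s)\rangle}{\Delta_{km}^2(s)} ,
\]
which already carries one explicit derivative of $H$ in its numerator. The heart of the argument is then a Leibniz-rule bookkeeping. Expanding $\lambda_{km}^{(q)}$, every term is a product of a derivative of the numerator $\langle\phi_k|H^{(1)}|\phi_m\rangle$ of some order $j\leqslant q$ and a derivative of $1/\Delta_{km}^2$ of order $q-j$. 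Expanding the numerator derivative in turn by the product rule, each of its terms has the form $\langle\phi_k^{(a)}|H^{(1+b)}|\phi_m^{(c)}\rangle$ with $a+b+c=j\leqslant q\leqslant p-1$, so the Hamiltonian appears differentiated exactly $1+b$ times with $1\leqslant 1+b\leqslant p$. By hypothesis \eqref{eq:Hnullderivatives} every such $H^{(1+b)}(\tilde s)$ vanishes; consequently all derivatives of the numerator up to order $p-1$ vanish at $\tilde s$, and hence $\lambda_{km}^{(q)}(\tilde s)=0$ for $0\leqslant q\leqslant p-1$. Substituting back yields $b_{nm}^{(r)}(\tilde s)=0$ for all $1\leqslant r\leqslant p$, which is exactly \eqref{eq:offdiagonalzero}.

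The only points demanding care are finiteness of the auxiliary factors: the eigenstate derivatives $\phi_k^{(a)},\phi_m^{(c)}$ and the derivatives of $1/\Delta_{km}^2$ that multiply the vanishing $H$-derivatives must be bounded at $\tilde s$, which holds as long as the relevant gap $\Delta_{km}(\tilde s)$ does not close. I expect the main obstacle to be organizational rather than conceptual: keeping the multi-index Leibniz expansion honest so as to guarantee that in \emph{every} term at least one factor is an $H$-derivative of order between $1$ and $p$. An equivalent and arguably cleaner route, which I would mention as an alternative, is to first prove by induction on $j$ --- differentiating the eigenvalue equation $H|\phi_n\rangle=E_n|\phi_n\rangle$ --- that \eqref{eq:Hnullderivatives} also forces $\phi_n^{(j)}(\tilde s)=0$ and $E_n^{(j)}(\tilde s)=0$ for $1\leqslant j\leqslant p$; this renders the vanishing of each numerator term manifest without having to track which factor carries the surviving $H$-derivative.
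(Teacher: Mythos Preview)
Your proof is correct and follows essentially the same route as the paper: both use Lemma~\ref{th:chi} to reduce the claim to showing $\lambda_{km}^{(q)}(\tilde s)=0$ for $0\leqslant q\leqslant p-1$, and both establish this via the Leibniz rule, observing that every term in the expansion carries a factor $H^{(j)}$ with $1\leqslant j\leqslant p$. The paper applies Leibniz to the factorization $\lambda_{nm}=M_{nm}/\Delta_{nm}$ and then invokes Eq.~\eqref{eq:M_as_Hdot}, whereas you work directly with $\lambda_{km}=-\langle\phi_k|H^{(1)}|\phi_m\rangle/\Delta_{km}^2$ and are more explicit about the product-rule bookkeeping on the numerator, but the argument is the same.
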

\begin{proof}
First notice that the Leibniz rule for derivatives provides
\begin{equation}
\label{Leibniz}
\lambda_{nm}^{(q)}(s)=\sum _{j=0}^q
   \frac{q!}{(q-j)!j!}\frac{d^{j}M_{nm}(s)}{{ds}^{j}}
	\frac{d^{q-j}}{ds^{q-j}}\left(\frac{1}{\Delta_{nm}(s)}\right),    
\end{equation}
an expression in which, due to \eqref{eq:M_as_Hdot}, every term involves a derivative of $H(s)$ of order 1 through $(q+1)$. By setting such derivatives of $H(s)$ to zero as in the statement, $\lambda_{nm}^{(q)}(\tilde s)=0$ for $q=0,1,...,p-1$. Due to Lemma \ref{th:chi}, the coefficients $b_{nm}^{(p)}(\tilde s)$ also become zero. 
\end{proof}

At this point we turn to the diagonal terms $b_{nn}^{(p)}(s)$ and to the conditions imposed on the Hamiltonian at the initial time.
\begin{lem}
The condition
\begin{equation}
H^{(1)}(0) = H^{(2)}(0) = ... = H^{(p)}(0) = 0 \ ,
\label{eq:Hnull_initial_derivatives_p}
\end{equation}
implies $\chi_{nmk}^{r,q}(s)\sim \delta_{m0}$ for every coefficient $\chi_{nmk}^{r,q}(s)$ in \eqref{ansatz} with $r\in\{1,2,...,p+1\}$.
\label{th:chideltam0}
\end{lem}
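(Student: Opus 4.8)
The plan is to prove the claim by induction on the order $r$, using the recurrence relations \eqref{proofeight} and \eqref{proofseven} for the $\chi$ coefficients established in Lemma \ref{th:chi}. The base case $r=1$ is immediate: the only nonvanishing coefficient is $\chi_{nmk}^{1,0}(s)=\delta_{kn}\delta_{m0}$ [cf. \eqref{quione}], which is manifestly proportional to $\delta_{m0}$. For the inductive step I would assume $\chi_{nmk}^{r,q}(s)\sim\delta_{m0}$ for all $q$ and some $r\leqslant p$, and show the same holds for order $r+1$.

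Inspecting the recurrence, the term $\chi_{nmk}^{r+1,r}=\chi_{nmk}^{r,r-1}/\Delta_{nm}$ inherits the factor $\delta_{m0}$ directly from the induction hypothesis, and in \eqref{proofseven} the pieces $\dot\chi_{nmk}^{r,q}$, $\chi_{nmk}^{r,q-1}$ and $\sum_{l\neq n,m}M_{nl}\chi_{lmk}^{r,q}$ all carry $\delta_{m0}$ for the same reason, since neither differentiation in $s$ nor the sum over $l$ touches the index $m$. The only term that threatens the $\delta_{m0}$ structure is the source term $\Delta_{nm}(s)\,b_{mm}^{(r)}(s)\,\delta_{q0}\delta_{kn}$. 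Thus the whole step reduces to showing that $b_{mm}^{(r)}(s)=0$ for $m\neq0$, so that this term too vanishes unless $m=0$.

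This diagonal vanishing is the crux of the argument and I expect it to be the main obstacle, as it forces one to leave the purely algebraic recurrence and invoke both the dynamics and the initial data. I would establish it in two stages. First, the induction hypothesis makes every off-diagonal coefficient vanish for excited indices: for $m\neq0$, \eqref{ansatz} gives $b_{km}^{(r)}(s)=\sum_{q}\sum_{j\neq m}\chi_{kmj}^{r,q}(s)\lambda_{jm}^{(q)}(s)=0$. Feeding this into the diagonal ($n=m$) case of the recurrence \eqref{fundamental}, namely $\dot b_{mm}^{(r)}(s)=-\sum_{k\neq m}M_{mk}(s)b_{km}^{(r)}(s)$, the right-hand side vanishes, so $b_{mm}^{(r)}(s)=b_{mm}^{(r)}(0)$ is constant in $s$ for $m\neq0$. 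Second, I would pin down this initial value: the hypothesis $H^{(1)}(0)=\cdots=H^{(p)}(0)=0$ is exactly the premise of Corollary \ref{th:offdiagzero} at $\tilde s=0$, so all off-diagonal coefficients vanish at $s=0$ up to order $p$; combining this with \eqref{initialone}, which enforces $b_m^{(r)}(0)=\sum_{m'}b_{mm'}^{(r)}(0)=0$ for $r\geqslant1$, leaves $b_{mm}^{(r)}(0)=0$. Hence $b_{mm}^{(r)}(s)\equiv0$ for $m\neq0$ and $r\leqslant p$, which makes the source term proportional to $\delta_{m0}$ and closes the inductive step.

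The delicate point to handle with care is the coupling and ordering between the two families of coefficients: the recurrence for $\chi^{r+1}$ requires knowledge of $b_{mm}^{(r)}$, whose vanishing in turn relies on the $\delta_{m0}$ structure of $\chi^{r}$ (through the off-diagonal $b_{km}^{(r)}$) together with initial conditions valid only through order $p$. I would therefore run a single strong induction carrying both statements simultaneously, and observe that the restriction $r\leqslant p$ for the initial vanishing is precisely what permits the argument to reach—but not exceed—order $p+1$, as demanded by the statement.
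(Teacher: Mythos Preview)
Your proof is correct and follows essentially the same approach as the paper's: induction on $r$ with base case $\chi_{nmk}^{1,0}=\delta_{kn}\delta_{m0}$, the recurrences \eqref{proofeight}--\eqref{proofseven} to propagate the $\delta_{m0}$ factor, and the identification of the source term $b_{mm}^{(r)}(s)$ as the only obstruction, which you eliminate for $m\neq0$ via the same two ingredients the paper uses---the induction hypothesis (to kill the off-diagonal $b_{km}^{(r)}$ and hence the integral/derivative term in the diagonal evolution) and Corollary~\ref{th:offdiagzero} at $\tilde s=0$ combined with \eqref{initialone} (to kill the initial value). Your closing remark that the need for $r\leqslant p$ in the initial-value step is what caps the induction at order $p+1$ matches the paper's observation that the argument fails for $\chi^{p+2}$ because $b_{mm}^{(p+1)}(0)\neq0$.
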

\begin{proof}
The proof is by induction, beginning with the assertion that, for $r=1$, the coefficients are $\chi_{nmk}^{1,q}(s)=\chi_{nmk}^{1,0}(s) = \delta_{kn}\delta_{m0}$, as mentioned after Eq.\eqref{quione}.
The recurrence relations \eqref{proofeight} and \eqref{proofseven} show that if $\chi_{nmk}^{r,q}\sim \delta_{m0}$ ($r\leqslant p$) holds for some value $r$, then it holds for all terms of $\chi_{nmk}^{r+1,q}$ except possibly the one involving $b_{mm}^{(r)}(s)$. But \eqref{fundamental} furnishes a differential equation for the latter which, analogously to Eq.\eqref{integral}, admits the formal solution
\begin{equation}
\label{soleq}
b_{mm}^{(r)}(s) = b_{mm}^{(r)}(0) - \sum_{k\neq m} \int_0^sM_{mk}(s')b_{km}^{(r)}(s')ds'.    
\end{equation}
By hypothesis, $\chi_{nmk}^{r,q}(s')$ are proportional to $\delta_{m0}$ and, due to Lemma \ref{th:chi}, so are $b_{km}^{(r)}(s')$. The initial condition \eqref{initialone} provides
\begin{equation}
\label{initialdifferentialeq}
b_{mm}^{(r)}(0) = - \sum _{k\neq m} b_{mk}^{(r)}(0) = 0 ,    
\end{equation}
where the last equality follows from Corollary \ref{th:offdiagzero} with $\tilde s=0$. As such, $b_{mm}^{(r)}(s)\sim \delta_{m0}$, 
therefore, $\chi_{nmk}^{r+1,q}(s)\sim \delta_{m0}$. The induction breaks down for $\chi_{nmk}^{p+2,q}(s)$ because the initial condition $b_{mm}^{(p+1)}(0)\neq0$.
\end{proof}
We now highlight a result on the diagonal terms $b_{mm}^{(p)}(s)$ that appeared in this demonstration.
\begin{cor}[Diagonal terms]
The condition
\begin{equation}
H^{(1)}(0) = H^{(2)}(0) = ... = H^{(p)}(0) = 0
\label{eq:Hnull_initial_derivatives}
\end{equation}
implies, for $m\neq0$, 
\begin{equation}
b_{mm}^{(1)}(s) = b_{mm}^{(2)}(s) = ... = b_{mm}^{(p)}(s) \equiv  0 \ .
\label{eq:diagonalzero}
\end{equation}
\label{th:diagonalzero}
\end{cor}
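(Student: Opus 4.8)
The plan is to recognize that this Corollary is essentially a byproduct of the inductive argument already carried out in the proof of Lemma~\ref{th:chideltam0}, so I would assemble the relevant intermediate facts rather than start afresh. The two ingredients needed are the vanishing of the off-diagonal coefficients carrying a nonzero second index, and the integral representation of the diagonal coefficients.

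First I would invoke Lemma~\ref{th:chideltam0}, which states that under the hypothesis~\eqref{eq:Hnull_initial_derivatives} every coefficient $\chi_{nmk}^{r,q}(s)$ is proportional to $\delta_{m0}$ for $r\in\{1,\ldots,p+1\}$. Combined with the general form in Lemma~\ref{th:chi}, this forces the off-diagonal coefficients to vanish whenever the second index is nonzero: $b_{km}^{(r)}(s)\equiv0$ for all $k\neq m$, $m\neq0$ and $r\leqslant p+1$. This is the fact that annihilates the integrand below.

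Next I would use the integral representation of the diagonal term, Eq.~\eqref{soleq}, which follows from the recurrence~\eqref{fundamental} with $n=m$:
\begin{equation*}
b_{mm}^{(r)}(s) = b_{mm}^{(r)}(0) - \sum_{k\neq m}\int_0^s M_{mk}(s')\,b_{km}^{(r)}(s')\,ds'.
\end{equation*}
For $m\neq0$ and $1\leqslant r\leqslant p$, the boundary term vanishes because $b_{mm}^{(r)}(0)=-\sum_{k\neq m}b_{mk}^{(r)}(0)=0$, by the initial condition~\eqref{initialone} together with Corollary~\ref{th:offdiagzero} applied at $\tilde s=0$, where the hypothesis guarantees that the off-diagonal terms vanish. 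The integrand vanishes by the off-diagonal result of the previous paragraph. Hence $b_{mm}^{(r)}(s)\equiv0$ throughout $[0,1]$, which is the claim.

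The argument presents no genuine obstacle; the only point demanding care is the bookkeeping of the index range. The off-diagonal vanishing is available up to order $p+1$, one order beyond what the diagonal statement requires, so every off-diagonal coefficient entering the integrand for $r\leqslant p$ is indeed covered. I would also stress that the induction of Lemma~\ref{th:chideltam0} breaks down precisely at order $p+1$, since $b_{mm}^{(p+1)}(0)\neq0$ in general, which is exactly why the diagonal statement is asserted only up to order $p$.
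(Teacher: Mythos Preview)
Your proposal is correct and mirrors the paper's own treatment: the paper does not give a separate proof of this Corollary but simply remarks that the result was established inside the induction of Lemma~\ref{th:chideltam0}, using exactly the integral representation~\eqref{soleq}, the vanishing of the off-diagonal $b_{km}^{(r)}$ for $m\neq0$ via Lemma~\ref{th:chi} plus $\chi\sim\delta_{m0}$, and the initial condition~\eqref{initialdifferentialeq} via Corollary~\ref{th:offdiagzero} at $\tilde s=0$. Your bookkeeping on the index range is also accurate.
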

Corollaries \ref{th:offdiagzero} and \ref{th:diagonalzero}, when substituted in the expansion \eqref{eq:coeff}, lead to an important conclusion:
\begin{lem}[Vanishing coefficients]
If, for any $\tilde{s}\in[0,1]$ and all $r\in\{1,2,...,p\}$,
\begin{equation}
H^{(r)}(0) = 0 = H^{(r)}(\tilde s) \ ,
\label{eq:Hnull_init_final}
\end{equation}
then, for $n\neq0$,
\begin{equation}
b_n^{(1)}(\tilde s) = b_n^{(2)}(\tilde s) = ... = b_n^{(p)}(\tilde s) = 0 \ .
\label{eq:bnnull}
\end{equation}
\end{lem}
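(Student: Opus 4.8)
The plan is to assemble the final statement directly from the two Corollaries already established, using the decomposition of $b_n^{(r)}$ into phase-weighted components given in Eq.\eqref{eq:coeff}. Writing $b_n^{(r)}(\tilde s)=\sum_{m=0}e^{iT\omega_{nm}(\tilde s)}b_{nm}^{(r)}(\tilde s)$, I would split the $m$-sum into the off-diagonal part ($m\neq n$) and the single diagonal term ($m=n$, for which $\omega_{nn}=0$ so the phase is trivial), and then argue that each piece vanishes separately for every $r\in\{1,\dots,p\}$. Since the unit-modulus phases $e^{iT\omega_{nm}(\tilde s)}$ can never make a term vanish on their own, it suffices to show that each coefficient $b_{nm}^{(r)}(\tilde s)$ appearing in the sum is zero.

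For the off-diagonal part I would invoke Corollary \ref{th:offdiagzero} at the time $\tilde s$: the hypothesis $H^{(r)}(\tilde s)=0$ for $r=1,\dots,p$ is exactly condition \eqref{eq:Hnullderivatives} with $\tilde s$ playing the role of the free time, so it yields $b_{nm}^{(r)}(\tilde s)=0$ for all $n\neq m$ and all $r\leqslant p$. Hence every $m\neq n$ term drops out of the sum. For the diagonal term, the relevant case is $n\neq0$, so I would invoke Corollary \ref{th:diagonalzero}, whose hypothesis $H^{(r)}(0)=0$ for $r=1,\dots,p$ is the other half of the assumed condition \eqref{eq:Hnull_init_final}. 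This gives $b_{nn}^{(r)}(s)\equiv0$ identically in $s$ for $n\neq0$ and $r\leqslant p$, hence in particular $b_{nn}^{(r)}(\tilde s)=0$. Combining the two contributions, every surviving coefficient in $b_n^{(r)}(\tilde s)$ vanishes, which establishes \eqref{eq:bnnull}.

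I expect no genuine obstacle at this stage, because all of the analytic work has been front-loaded into Lemma \ref{th:chi} and its two Corollaries; the present lemma is essentially a bookkeeping step that reads off the consequence for the physically relevant objects $b_n^{(r)}$. The single point deserving care is the \emph{complementarity} of the two hypotheses: the off-diagonal vanishing requires the derivative conditions at the \emph{evaluation} time $\tilde s$, whereas the diagonal vanishing is a global (integrated) consequence that needs the conditions at the \emph{initial} time $0$. Neither Corollary alone is enough; it is precisely the double assumption $H^{(r)}(0)=0=H^{(r)}(\tilde s)$ that supplies both ingredients simultaneously, and I would flag this explicitly so the reader sees why both endpoints enter.

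Finally, I would close by recording the intended application: specializing to $\tilde s=1$ under the boundary-cancelation conditions \eqref{eq:conds_boundarycancel} gives $b_n^{(1)}(1)=\dots=b_n^{(p)}(1)=0$ for $n\neq0$, which is exactly the assumption invoked in Appendix \ref{sec:expansionD1T} to justify truncating the expansion at the first nonvanishing order and thereby to obtain the altered trade-off \eqref{eq:DpowersTuptop2}. This ties the lemma back to the $p\geqslant1$ regime used in the main text and confirms that the first surviving coefficient is $b_n^{(p+1)}(1)$.
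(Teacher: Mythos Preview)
Your proposal is correct and follows essentially the same route as the paper: the paper states this lemma simply as the consequence of substituting Corollaries~\ref{th:offdiagzero} and~\ref{th:diagonalzero} into the expansion~\eqref{eq:coeff}, which is precisely your split into off-diagonal terms (killed by the derivative conditions at $\tilde s$) and the diagonal term (killed by the conditions at $0$). Your explicit remark on the complementarity of the two hypotheses is a nice clarification that the paper leaves implicit.
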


We are then left with the task of calculating the leading contribution in these conditions, namely, the one in lowest power of $1/T$ not identically zero. Since $\lambda_{nm}^{(q)}(\tilde s)=0$ for $q=0,1,...,p-1$ as mentioned in the proof of Corollary \ref{th:offdiagzero}, the expansion \eqref{ansatz} for $b_{nm}^{(p+1)}(\tilde s)$ $(n\neq m)$ reads
\begin{equation}
\label{quasifinalone}
b_{nm}^{(p+1)}(\tilde{s})= \sum _{k\neq m} \chi _{nmk}^{p+1,p}(\tilde{s})\lambda _{km}^{(p)}(\tilde{s}) \ .
\end{equation}
The coefficients $\chi_{nmk}^{p+1,p}(\tilde{s})$ can be obtained by iterating \eqref{proofeight} up to $\chi_{nmk}^{1,0}(s) = \delta_{kn}\delta_{m0}$:
\begin{equation}
\label{quasifinaltwo}
\chi_{nmk}^{p+1,p}(\tilde{s}) = \frac{\delta_{kn}\delta_{m0}}{\Delta_{nm}^p(\tilde{s})} \ .
\end{equation}
For $n = m$ $\neq0$, Eq.\eqref{soleq} and Lemma \ref{th:chideltam0} show that
\begin{equation}
b_{nn}^{(p+1)}(s) = b_{nn}^{(p+1)}(0) \ ,
\label{eq:bnnp1constant}
\end{equation}
and employing \eqref{initialone} together with \eqref{quasifinalone} and \eqref{quasifinaltwo} at $\tilde s=0$, we have 
\begin{equation}
\label{quasifinalthree}
b_{nn}^{(p+1)}(s) = b_{nn}^{(p+1)}(0) = -\sum_{m\neq n}b_{nm}^{(p+1)}(0) = - \frac{\lambda _{n0}^{(p)}(0)}{\Delta^{p}_{n0}(0)}.    
\end{equation}
Eq.\eqref{eq:bp1} for the leading term $b_n^{(p+1)}\left(\tilde{s}=1\right)$ then follows from substituting  \eqref{quasifinalone}, \eqref{quasifinaltwo} and \eqref{quasifinalthree} into \eqref{eq:coeff}. In order to obtain Eq.\eqref{eq:D_end}, one must use
\begin{equation}
\label{lambdap}   
\lambda _{n0}^{(p)}(\tilde s) = - \frac{\langle \phi _n(\tilde s)|H^{(p+1)}(\tilde s)|\phi _0(\tilde s)\rangle }{\Delta _{n0}^{p+2}(\tilde s)},
\end{equation}
at $\tilde s = 0$ and $1$, valid due to the vanishing derivatives of $H(s)$.

%%%%%%%%%%%%%%%%%%%%%%%%%%%%%%%%%%%%%%%%%%%%%%%%%%%%%%%%%%%%%%%%%%%%%%%%%%%%%%%%%%%%%%%%%%%%%%%%%%%%%%%%%%%%%%%%%%%%%%%%%%%%%%%%%%%%%%%%%%%%%%%%%
\section{Next-to-leading coefficients}
\label{sec:nextorderterms}
Our last proof tackles the next-order coefficients, using the results from App \ref{sec:leading_coeff} above.

\subsection{General Case}

Let us obtain $b_n^{(2)}(s)$. Lemma \ref{th:chi} provides, for $n\neq m$,
\begin{equation}
\label{appbnm2}
b_{nm}^{(2)}(s) = \sum _{k\neq m} \chi_{nmk}^{2,0}(s)\lambda _{km}^{(0)}(s)+\chi_{nmk}^{2,1}(s)\lambda _{km}^{(1)}(s).
\end{equation}
The first order coefficient $\chi _{nmk}^{1,0}(s) = \delta _{nk}\delta_{m0}$ and the recurrence relations  \eqref{proofeight} allow us to obtain the coefficients
\begin{equation}
\label{reccurencetwoonechi}
\chi _{nmk}^{2,1}(s) = \frac{\chi_{nmk}^{1,0}(s)}{\Delta_{nm}(s)} = \frac{\delta_{nk}\delta_{m0}}{\Delta_{nm}(s)} \ ,
\end{equation}
while \eqref{proofseven}, \eqref{zero} and \eqref{eq:b001} lead to
\begin{equation}
\chi _{nmk}^{2,0}(s)= \frac{1}{\Delta_{nm}\left(s\right)}\sum_{l\neq n,m} M_{nl}(s)\delta_{lk}\delta_{m0}+J_0(s)\delta_{kn}\delta_{m0}-\lambda_{m0}(0)\delta_{kn}(1-\delta_{m0})
\label{reccurencetwozerochi}\end{equation}
The nondiagonal term $b_{nm}^{(2)}(s) $ can be written from \eqref{appbnm2}, \eqref{reccurencetwoonechi} and \eqref{reccurencetwozerochi} as
\begin{equation}
\label{appbnm2sol}
b_{nm}^{(2)}(s)= \left(\frac{\dot{\lambda}_{n0}(s)}{\Delta _{n0}(s)} + \sum_{k\neq 0,n} \frac{M_{nk}(s)\lambda_{k0}(s)}{\Delta_{n0}(s)}+J_0(s)\lambda_{n0}(s)\right)\delta_{m0}-\lambda_{m0}(0)\lambda_{nm}(s)(1-\delta_{m0}).    
\end{equation}
We can obtain $b_{nn}^{(2)}(\tilde{s})$ for $n \neq 0$ from \eqref{soleq}, 
\begin{equation}
b_{nn}^{(2)}(s) = b_{nn}^{(2)}(0) - \sum_{k\neq n} \int_0^sM_{nk}(s')b_{kn}^{(2)}(s')ds'\ ,       
\end{equation}
which furnishes
\begin{equation}
\label{appbmmsol}
b_{nn}^{(2)}(s) = -\sum _{m\neq n} b_{nm}^{(2)}(0)- \lambda _{n0}(0)J_n(s),
\end{equation}
where we have made use of \eqref{initialone}. Substituting expressions \eqref{appbnm2sol} and \eqref{appbmmsol} into \eqref{eq:coeff} leads to Eq.\eqref{eq:b2}.

\subsection{Boundary cancelation}

Let us calculate $b_n^{(p+2)}(\tilde s)$ assuming the boundary-cancelation condition of Eq.\eqref{eq:conds_boundarycancel} up to $j=p$ ($p\geqslant 1$). Lemma \ref{th:chi} and \eqref{Leibniz} provide
\begin{equation}
\label{appnondiagonal}
b_{nm}^{(p+2)}(\tilde{s}\text) = 
\sum _{k\neq m}\left(\chi_{nmk}^{p+2,p+1}(\tilde{s})\lambda _{km}^{(p+1)}(\tilde{s})
+\chi_{nmk}^{p+2,p}(\tilde{s}) \lambda_{km}^{(p)}(\tilde{s})\right).    
\end{equation}
Any term of the form $\chi _{nmk}^{r+1,r}(\tilde{s})$ can be obtained by iterating \eqref{proofeight} up to $\chi _{nmk}^{1,0}(\tilde{s})=\delta _{kn}\delta_{m0}$, which gives us 
\begin{equation}
\label{appchione}
\chi_{nmk}^{r+1,r}(\tilde s)=\frac{\delta _{kn}\delta_{m0}}{\Delta_{nm}^{r}(\tilde s)}.  
\end{equation}
The coefficient $\chi _{nmk}^{p+2,p}(\tilde{s})$ can be obtained by iterating \eqref{proofseven}. In this iteration, the factors $\dot\chi_{nmk}^{r+1,r}(\tilde s)$ and $M_{nl}(\tilde s)$ on the right-hand side of \eqref{proofseven} are always null due to our hypothesis of $\dot H(0)=0=\dot H(\tilde s)$. The former can be seen by means of \eqref{appchione} and the Hellmann–Feynman relation ($\dot{E}_n(\tilde s) = \langle\phi _n|\dot{H}(\tilde s)|\phi _n\rangle=0$), the latter from \eqref{eq:M_as_Hdot}.
We are then left with
\begin{equation}
\label{appchitwo}
\chi _{nmk}^{p+2,p}(\tilde{s}) \ =\frac{\chi_{nmk}^{p+1,p-1}(\tilde{s})}{\Delta_{nm}(\tilde{s})} =\cdots =\  \frac{\chi _{nmk}^{2,0}(\tilde{s})}{\Delta_{nm}^p(\tilde{s})}.
\end{equation}
From \eqref{reccurencetwozerochi} with $M_{nl}(\tilde s)=0$, $\lambda_{m0}(0)=0$,
\begin{equation}
\label{appchifour}
\chi_{nmk}^{p+2,p}(\tilde{s})=\frac{J_0(\tilde{s})}{\Delta_{n0}^{p}(\tilde{s})}\delta_{kn}\delta_{m0}.    
\end{equation}
The nondiagonal term $b_{nm}^{(p+2)}(\tilde{s})$ can be written from \eqref{appnondiagonal}, \eqref{appchione} and \eqref{appchifour} as
\begin{equation}
\label{appbnmpplustwo}
 b_{nm}^{(p+2)}(\tilde{s}) = \left(\frac{\lambda_{n0}^{(p+1)}(\tilde{s})}{\Delta_{n0}^{p+1}(\tilde{s})}
+ J_0(\tilde{s})\frac{\lambda_{n0}^{(p)}(\tilde{s})}{\Delta_{n0}^p(\tilde{s})}\right)\delta_{m0}.   
\end{equation}   

Now we obtain the diagonal terms $b_{nn}^{(p+2)}(\tilde{s})$ for $n\neq 0$. We need to solve 
\begin{equation}
\label{appintegral}
b_{nn}^{(p+2)}(s) = b_{nn}^{(p+2)}(0) - \sum_{k\neq n} \int_0^sM_{nk}(s')b_{kn}^{(p+2)}(s')ds'\ . 
\end{equation}
Whereas $b_{nn}^{(p+2)}(0)$ is found using \eqref{initialone} and \eqref{appbnmpplustwo} with $\tilde s=0$,
\begin{equation}
\label{appquasitwo}
b_{nn}^{(p+2)}(0) = -\sum_{k\neq n} b_{nk}^{(p+2)}(0) = - \frac{\lambda_{n0}^{(p+1)}(0)}{\Delta_{n0}^{p+1}(0)},    
\end{equation}
$b_{kn}^{(p+2)}(s)$ needs to be found for any time $s$, not just those times $\tilde s$ when derivatives of the Hamiltonian vanish, as in \eqref{appbnmpplustwo}. Due to \eqref{ansatz}, it depends on $\chi_{knl}^{p+2,q}(s)$, which is obtained from $\chi_{knl}^{p+1,q}(s)$ through \eqref{proofeight} and \eqref{proofseven} (except for $\chi_{knk}^{p+2,0}$). But Lemma \ref{th:chideltam0} guarantees that $\chi_{knl}^{p+1,q}(s)$ is identically zero for $n\neq0$, and the only remaining contribution comes from $\chi_{knk}^{p+2,0}=b_{nn}^{(p+1)}(s)$:
\begin{equation}
\label{essaqueinteressarapa}
b_{kn}^{(p+2)}(s) = b_{nn}^{(p+1)}(s)\lambda_{kn}(s) = \frac{\lambda _{n0}^{(p)}(0)}{\Delta^{p}_{n0}(0)}\lambda_{kn}(s)\ , 
\end{equation}
where \eqref{quasifinalthree} was used. Substituting \eqref{appquasitwo} and \eqref{essaqueinteressarapa} in \eqref{appintegral} and using  \eqref{eq:Jn}, we find, for $n\neq0$,
\begin{equation}
\label{appquasithree}
b_{nn}^{(p+2)}(s) = - \frac{\lambda_{n0}^{(p+1)}(0)}{\Delta_{n0}^{p+1}(0)} -J_n(s)\frac{\lambda_{n0}^{(p)}(0)}{\Delta_{n0}^p(0)}.
\end{equation}
The expression for $b_{n}^{(p+2)}(\tilde{s})$ in Eq.\eqref{eq:bp2} follows from \eqref{eq:coeff}, \eqref{appbnmpplustwo} and \eqref{appquasithree} with $\tilde s=1$, concluding the proof.
}

%%%%%%%%%%%%%%%%%%%%%%%%%%%%%%%%%%%%%%%%%%%%%%%%%%%%%%%%%%%%%%%%%%%%%%%%%%%%%%%%%%%%%%%%%%%%%%%
%%%%%%%%%%%%%%%%%%%%%%%%%%%%%%%%%%%%%%%%%%%%%%%%%%%%%%%%%%%%%%%%%%%%%%%%%%%%%%%%%%%%%%%%%%%%%%%
%%%%%Copy-pasted from the .bbl file%%%%

%

\end{document}